\setlist[enumerate,1]{label={(\roman*)}}
\theoremstyle{plain}
\newtheorem{theorem}{Theorem}[section]
\newtheorem{lemma}[theorem]{Lemma}
\newtheorem{proposition}[theorem]{Proposition}
\newtheorem{corollary}[theorem]{Corollary}
\newtheorem{remark}[theorem]{Remark}
\newtheorem{example}[theorem]{Example}
\theoremstyle{definition}
\newtheorem{definition}[theorem]{Definition}
\newcommand{\ed}{\mathop{}\!\mathrm{d}}
\newcommand{\norm}[2][]{\left\|#2\right\|_{#1}}
\newcommand{\ket}[1]{\left|#1\right\rangle}
\newcommand{\bra}[1]{\left\langle #1\right|}
\newcommand{\ketbra}[2]{\left|#1\middle\rangle\!\middle\langle#2\right|}
\newcommand{\braket}[2]{\left\langle#1\middle|#2\right\rangle}
\newcommand{\setbuild}[2]{\left\{#1\middle|#2\right\}}
\DeclareMathOperator{\boundeds}{\mathcal{B}}
\DeclareMathOperator{\substates}{\mathcal{S}_{\le}}
\DeclareMathOperator{\Tr}{Tr}
\DeclareMathOperator{\Hom}{Hom}
\DeclareMathOperator{\support}{supp}
\DeclareMathOperator{\spectrum}{spec}
\DeclareMathOperator{\id}{id}
\newcommand{\relativeentropy}[3][]{\mathop{D_{#1}}\mathopen{}\left(#2\middle\|#3\right)\mathclose{}}
\newcommand{\maxrelativeentropy}[3][]{D^{#1}_{\textnormal{max}}\mathopen{}\left(#2\middle\|#3\right)\mathclose{}}
\newcommand{\complexes}{\mathbb{C}}
\newcommand{\naturals}{\mathbb{N}}
\newcommand{\nonnegativereals}{\mathbb{R}_{\ge 0}}
\newcommand{\positivereals}{\mathbb{R}_{>0}}
\newcommand{\preorderle}{\preccurlyeq}
\newcommand{\preorderge}{\succcurlyeq}
\newcommand{\asymptoticle}{\precsim}
\newcommand{\asymptoticge}{\succsim}
\DeclareMathOperator{\ev}{ev}
\newcommand{\boxes}[1]{\mathcal{B}_{#1}}
\newcommand{\cboxes}[1]{\mathcal{B}_{c,#1}}
\newcommand{\realspectrum}{\Delta}
\newcommand{\testspectrum}{\hat{\Delta}}
\newcommand{\tropicals}{\mathbb{TR}}
\newcommand{\pinching}[1]{\mathcal{P}_{#1}}
\newcommand{\sandwicheddivergence}[3]{\widetilde{D}_{#1}\left(#2\middle\|#3\right)}
\title{Asymptotic relative submajorization of multiple-state boxes}
\author[1,2]{Gergely Bunth}
\author[1,2]{P\'eter Vrana}
\affil[1]{Institute of Mathematics, Budapest University of Technology and Economics, Egry~J\'ozsef u.~1., Budapest, 1111 Hungary.}
\affil[2]{MTA-BME Lend\"ulet Quantum Information Theory Research Group}
\begin{document}
\maketitle
\begin{abstract}
Pairs of states, or ``boxes'' are the basic objects in the resource theory of asymmetric distinguishability (Wang and Wilde, 2019), where free operations are arbitrary quantum channels that are applied to both states. From this point of view, hypothesis testing is seen as a process by which a standard form of distinguishability is distilled. Motivated by the more general problem of quantum state discrimination, we consider boxes of a fixed finite number of states and study an extension of the relative submajorization preorder to such objects. In this relation a tuple of positive operators is greater than another if there is a completely positive trace nonincreasing map under which the image of the first tuple satisfies certain semidefinite constraints relative to the other one. This preorder characterizes error probabilities in the case of testing a composite null hypothesis against a simple alternative hypothesis, as well as certain error probabilities in state discrimination. We present a sufficient condition for the existence of catalytic transformations between boxes, and a characterization of an associated asymptotic preorder, both expressed in terms of sandwiched R\'enyi divergences. This characterization of the asymptotic preorder directly shows that the strong converse exponent for a composite null hypothesis is equal to the maximum of the corresponding exponents for the pairwise simple hypothesis testing tasks.
\end{abstract}

\section{Introduction}

Resource theories provide a unique viewpoint within numerous areas in quantum information theory and physics, such as entanglement theory and quantum thermodynamics \cite{chitambar2019quantum}. Building upon the work of Matsumoto \cite{matsumoto2010reverse}, Wand and Wilde \cite{wang2019resource} carried out a systematic development of the resource-theoretic approach to hypothesis testing in the form of a resource theory of asymmetric distinguishability (see also \cite{buscemi2019information}, where related results are independently obtained with a different perspective). The objects of this resource theory are pairs of quantum states, ordered by joint transformations with general quantum channels. The task of hypothesis testing is then interpreted as distillation of standard pairs, ``bits of asymmetric distinguishability'', the quantum min- and max-divergences \cite{datta2009min} as well as the quantum relative entropy \cite{hiai1991proper,ogawa2005strong} emerge as the distillable distinguishability and distinguishability costs in single-shot and asymptotic settings.

In \cite{wang2019resource} it is also suggested that the resource theoretic study could be extended to more general discrimination tasks, including the discrimination of more than two states, as in the theory of quantum state discrimination \cite{chefles2000quantum,barnett2009quantum,bae2015quantum}. In this work we take a step in this direction, considering boxes consisting of multiple states. With the aim of incorporating probabilities and approximations in the objects compared, we work with tuples of unnormalized states and introduce a generalization of relative submajorization \cite{renes2016relative} to boxes. We say that a box $(\rho_1,\ldots,\rho_m,\sigma)$ relatively submajorizes $(\rho'_1,\ldots,\rho'_m,\sigma')$ if there is a completely positive trace-nonincreasing map $T$ such that $T(\rho_i)\ge\rho'_i$ and $T(\sigma)\le\sigma'$ as positive semidefinite matrices. For normalized boxes this relation is equivalent to the existence of a joint exact transformation of the states (for classical boxes also to matrix majorization in the sense of \cite{dahl1999matrix}), and for unnormalized boxes it encodes error probabilities in hypothesis testing and state discrimination tasks.

Our main results concern asymptotic, catalytic and many-copy relaxations of the relative submajorization relation between such boxes (see Section~\ref{sec:preliminaries} for precise definitions). We find that in these limits the transformations are governed by certain pairwise R\'enyi divergences, and as in \cite{perry2020semiring}, the relevant quantum extensions are the sandwiched R\'enyi divergences \cite{muller2013quantum,wilde2014strong}. More precisely, given (unnormalized) boxes $(\rho_1,\ldots,\rho_m,\sigma)$ and $(\rho'_1,\ldots,\rho'_m,\sigma')$, where $\sigma$ and $\sigma'$ are invertible, we prove the following:
\begin{enumerate}
\item\label{it:mainequivalence} $(\rho_1,\ldots,\rho_m,\sigma)$ asymptotically relative submajorizes $(\rho'_1,\ldots,\rho'_m,\sigma')$ iff for all $\alpha\ge 1$ and all $i\in\{1,\ldots,m\}$ the inequalities
\begin{equation}
\Tr\left(\sigma^{\frac{1-\alpha}{2\alpha}}\rho_i\sigma^{\frac{1-\alpha}{2\alpha}}\right)^\alpha\ge\Tr\left({\sigma'}^{\frac{1-\alpha}{2\alpha}}\rho'_i{\sigma'}^{\frac{1-\alpha}{2\alpha}}\right)^\alpha
\end{equation}
hold.
\item\label{it:mainsufficient} If for every $\alpha\ge 1$ and all $i\in\{1,\ldots,m\}$ the \emph{strict} inequalities
\begin{equation}
\Tr\left(\sigma^{\frac{1-\alpha}{2\alpha}}\rho_i\sigma^{\frac{1-\alpha}{2\alpha}}\right)^\alpha>\Tr\left({\sigma'}^{\frac{1-\alpha}{2\alpha}}\rho'_i{\sigma'}^{\frac{1-\alpha}{2\alpha}}\right)^\alpha
\end{equation}
hold and in addition
\begin{equation}
\norm[\infty]{\sigma^{-1/2}\rho_i\sigma^{-1/2}}>\norm[\infty]{{\sigma'}^{-1/2}\rho'_i{\sigma'}^{-1/2}}
\end{equation}
for all $i\in\{1,\ldots,m\}$, then
\begin{enumerate}[1)]
\item for all sufficiently large $n\in\naturals$ the box $(\rho_1^{\otimes n},\ldots,\rho_m^{\otimes n},\sigma^{\otimes n})$ relatively submajorizes $({\rho'_1}^{\otimes n},\ldots,{\rho'_m}^{\otimes n},{\sigma'}^{\otimes n})$;
\item there exists a catalyst $(\omega_1,\ldots,\omega_m,\tau)$ such that $(\rho_1\otimes\omega_1,\ldots,\rho_m\otimes\omega_m,\sigma\otimes\tau)$ relative submajorizes $(\rho'_1\otimes\omega_1,\ldots,\rho'_m\otimes\omega_m,\sigma'\otimes\tau)$.
\end{enumerate}
\end{enumerate}

As a special case, the normalized box $(\rho_1,\ldots,\rho_m,\sigma)$ can be viewed as a hypothesis testing problem where the states $\rho_1,\ldots,\rho_m$ form a composite null hypothesis, to be tested against the simple alternative hypothesis $\sigma$. In the strong converse regime, a type I error $1-2^{-Rn+o(n)}$ with a type II error $2^{-rn}$ is achievable iff $(\rho_1,\ldots,\rho_m,\sigma)$ asymptotically relative submajorizes the unnormalized box $(2^{-R},\ldots,2^{-R},2^{-r})$. From the characterization \ref{it:mainequivalence} we see that this happens precisely when
\begin{equation}
R\ge\max_i\sup_{\alpha>1}\frac{\alpha-1}{\alpha}\left[r-\sandwicheddivergence{\alpha}{\rho_i}{\sigma}\right].
\end{equation}

We prove these results using recent advances in the theory of preordered semirings \cite{fritz2020local,vrana2020generalization}. In particular, we introduce the semiring of unnormalized boxes, equipped with the preorder given by relative submajorization, and apply two generalizations of Strassen's characterization theorem \cite{strassen1988asymptotic}. The one given in \cite{vrana2020generalization} leads to the equivalence \ref{it:mainequivalence} after classifying the nonnegative real-valued monotone homomorphisms. On the other hand, the implication \ref{it:mainsufficient} is an application of \cite[1.4. Theorem]{fritz2020local}, and in addition requires the classification of the monotone homomorphisms into the tropical semiring. We find that, somewhat surprisingly, both kinds of monotone homomorphisms are pairwise quantities in the sense that each of them depends on only two states of the box.

The proof of our result on asymptotic relative submajorization uses some of the ideas from \cite{perry2020semiring}, where boxes of pairs are studied, but deviates substantially from it in two key steps (in addition to the more obvious differences in the classification of real-valued monotones and the tropical ones that were not considered there). The first difference is that in the case of pairs of states it was possible to find a set of multipliers (the one-dimensional pairs) with the property that every pair can be multiplied by a suitable element in such a way that the product is bounded from above and from below with respect to the natural numbers. This in turn made it possible to use \cite[Theorem 1.2.]{vrana2020generalization}, showing that monotone semiring homomorphisms characterize the asymptotic preorder. With multiple states such a set does not exist, and therefore we must take a different route, effectively applying \cite[Corollary 1.3.]{vrana2020generalization} to the semifield of fractions of the semiring of boxes. The second difference is that in \cite{perry2020semiring} an application of the $\sigma^{\otimes n}$-pinching map was sufficient to ensure that the resulting states commute and thus reduce the evaluation of the monotones on pairs of quantum states to pairs of classical distributions. With more than two states the pinching map alone is not sufficient as the images of different quantum states under the pinching map need not commute. To get around this problem we make use of the special form of the classical monotones.

The remainder of the paper is structured as follows. In Section~\ref{sec:preliminaries} we introduce the relevant notions related to preordered semirings and state recent results relating monotone semiring homomorphisms to several relaxations of the preorder. In Section~\ref{sec:boxsemiring} we introduce the semiring of boxes and extend the relative submajorization preorder to obtain a preordered semiring. In Section~\ref{sec:homomorphisms} we provide a classification of the monotone homomorphisms into the real and tropical real semiring. In Section~\ref{sec:transformations} we derive explicit conditions for asymptotic, many-copy and catalytic relative submajorization in terms of sandwiched R\'enyi divergences. In Section~\ref{sec:statediscrimination} we give applications to asymptotic state discrimination.

\section{Preliminaries}\label{sec:preliminaries}

A \emph{preordered semiring} is a tuple $(S,+,\cdot,0,1,\preorderle)$ where $S$ is a set, $+,\cdot:S\times S\to S$ are commutative and associative binary operations satisfying $(x+y)\cdot z=x\cdot z+y\cdot z$ for all $x,y,z\in S$, $0,1\in S$ are the zero element and the unit (i.e. $0\cdot x=0$ and $1\cdot x=x$ for all $x$), and $\preorderle\subseteq S\times S$ is a transitive and reflexive relation (preorder) such that $x\preorderle y$ implies $x+z\preorderle y+z$ and $x\cdot z\preorderle y\cdot z$ for every $x,y,z\in S$. We will adopt the convention that the binary operations and neutral elements are denoted uniformly with the same symbols $+,\cdot,0,1$ (with the multiplication sign often omitted as usual), and preordered semirings will be referred to via the abbreviated notation $(S,\preorderle)$, indicating only the underlying set and the preorder, or even just $S$ when the preorder is clear. This will in particular be the case in the following examples, where the preorder (in these cases a total order) will be denoted as $\le$.
\begin{example}
The set $\nonnegativereals$ of nonnegative real numbers with its usual addition, multiplication and total order is a preordered semiring.
\end{example}

\begin{example}[tropical semiring in the multiplicative picture]
As a set, the tropical semiring is $\tropicals=\nonnegativereals$, $x+y$ is defined as the maximum of $x$ and $y$, while $\cdot$ is the usual multiplication. We equip this semiring with the usual total order of the real numbers. This is a preordered semiring.
\end{example}

We will be interested in preordered semirings satisfying a pair of additional conditions. First, we require that the canonical map $\naturals\to S$ (the one that sends $n$ to the $n$-term sum $1+1+\cdots 1$) is an order embedding (i.e. injective and $m\le n$ as natural numbers iff their images, also denoted by $m$ and $n$ satisfy $m\preorderle n$). Second, the semiring is assumed to be of \emph{polynomial growth} \cite{fritz2018generalization}. This means that there exists an element $u\in S$ such that $u\preorderge 1$ and for every nonzero $x\in S$ there is a $k\in\naturals$ such that $x\preorderle u^k$ and $1\preorderle u^kx$. Any such element $u$ is called \emph{power universal}. A power universal element need not be unique but the subsequent definitions can be shown not to depend on a particular choice.
\begin{definition}
Let $x,y\in S$. We write $x\asymptoticge y$ and say that $x$ is asymptotically larger than $y$ if for some sublinear sequence $(k_n)_{n\in\naturals}$ of natural numbers and for all $n\in\naturals$ the inequality $u^{k_n}x^n\preorderge y^n$ holds.
\end{definition}

A monotone semiring homomorphism between the preordered semirings $(S_1,\preorderle_1)$ and $(S_2,\preorderle_2)$ is a map $\varphi:S_1\to S_2$ that satisfies $\varphi(0)=0$, $\varphi(1)=1$, $\varphi(x+y)=\varphi(x)+\varphi(y)$, $\varphi(xy)=\varphi(x)\varphi(y)$ and $x\preorderle_1 y\implies\varphi(x)\preorderle_2\varphi(y)$ for $x,y\in S_1$. We will consider monotone homomorphisms into the real and tropical real semirings. For these we introduce the following notations: given a preordered semiring $(S,\preorderle)$ we let $\realspectrum(S,\preorderle)=\Hom(S,\nonnegativereals)$ and $\testspectrum(S,\preorderle)=\realspectrum(S,\preorderle)\cup\setbuild{f\in\Hom(S,\tropicals)}{f(u)=2}$. The two parts will be referred to as the real and the tropical part of the \emph{spectrum} of the semiring. It should be noted that while there is an inherent normalization condition in the definition of a homomorphism into the nonnegative reals, there is no such limitation in tropical real valued homomorphisms since one can always rescale in a multiplicative sense by replacing $f(x)$ with $f^c(x)$ for some $c>0$ (see also \cite[Section 13.]{fritz2020local}). This is the reason for requiring that $f(u)=2$ in our definition (the number $2$ itself is arbitrary, but will be convenient relative to our choice of the power universal element $u$ later).

The evaluation map for an element $s\in S$ is the map $\ev_s:\realspectrum(S,\preorderle)\to\nonnegativereals$ defined as $f\mapsto f(s)$ (one could similarly consider the evaluation map on $\testspectrum(S,\preorderle)$, but it is this restricted form that we will need). It should be noted that both kinds of spectra can be endowed with a topology using the evaluation maps and in general $\testspectrum(S,\preorderle)$ is \emph{not} the disjoint union of its real and tropical part as topological spaces.

Our strategy will be to use the elements of the spectrum to characterize the asymptotic preorder. The main tool will be the following result from \cite{vrana2020generalization}.
\begin{theorem}\label{thm:asymptotic}
Let $(S,\preorderle)$ be a preordered semiring of polynomial growth such that $\naturals\hookrightarrow S$ is an order embedding. The following conditions are equivalent:
\begin{enumerate}
\item for every $x,y\in S\setminus\{0\}$ such that $\frac{\ev_y}{\ev_x}:\realspectrum(S,\preorderle)\to\nonnegativereals$ is bounded there is an $n\in\naturals$ such that $nx\asymptoticge y$
\item for every $x,y\in S$ we have $x\asymptoticge y\iff\forall f\in\realspectrum(S,\preorderle):f(x)\ge f(y)$.
\end{enumerate}
\end{theorem}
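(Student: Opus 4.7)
The plan is to establish the two implications of the equivalence separately. The direction (ii) $\Rightarrow$ (i) is essentially immediate: if $\ev_y/\ev_x$ is bounded above by some constant $M$ on $\realspectrum(S,\preorderle)$, I would pick any natural number $n\ge M$. Then for every $f\in\realspectrum(S,\preorderle)$ one has $f(nx)=nf(x)\ge Mf(x)\ge f(y)$, and applying (ii) to the pair $(nx,y)$ yields $nx\asymptoticge y$.

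For the converse direction (i) $\Rightarrow$ (ii), I would first handle the easy half: that $x\asymptoticge y$ entails $f(x)\ge f(y)$ for every $f\in\realspectrum(S,\preorderle)$. Applying such an $f$ to a witness $u^{k_n}x^n\preorderge y^n$ gives $f(u)^{k_n}f(x)^n\ge f(y)^n$; taking $n$-th roots and using $k_n/n\to 0$ together with $0<f(u)<\infty$ produces $f(x)\ge f(y)$, with the degenerate case $f(x)=0$ handled by inspection of the same inequality.

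The substantive content is the reverse half: assuming $f(x)\ge f(y)$ on the entire real spectrum, I must produce a sublinear sequence $(k_n)$ with $u^{k_n}x^n\preorderge y^n$. My strategy is to show, for each fixed $\epsilon>0$, that $u^{\lceil\epsilon n\rceil}x^n\preorderge y^n$ holds for all sufficiently large $n$; a diagonal construction over $\epsilon=1/m$ then assembles the required sublinear sequence. For a fixed $\epsilon$, I consider the pair $(u^{\lceil\epsilon n\rceil}x^n,y^n)$ in $S$ and observe that, using $f(u)\ge 1$ and $f(x)\ge f(y)$, the ratio $\ev_{y^n}/\ev_{u^{\lceil\epsilon n\rceil}x^n}$ is bounded above by $1$ uniformly on the spectrum. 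Hypothesis (i) then produces a natural number $N$ such that $Nu^{\lceil\epsilon n\rceil}x^n\asymptoticge y^n$.

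The main obstacle is that applying (i) separately at each $n$ allows $N=N_n$ to grow uncontrollably, and once $N_n^m$ is absorbed into $u$-powers via polynomial growth, the resulting contribution $m\cdot k(N_n)$ grows linearly in the total exponent $nm$ and destroys the sublinearity needed to conclude $x\asymptoticge y$. The crux, therefore, is to secure a choice of $N$ that is either uniform in $n$ or for which $k(N_n)=o(n)$. I would attack this by exploiting the compactness of $\realspectrum(S,\preorderle)$ in the evaluation topology together with the uniform bound $f(y)\le f(x)$, allowing a single application of (i) to suffice; an alternative route, closer to the Kadison--Dubois and Positivstellensatz circle of ideas, is to pass to an appropriate completion of $S$ (for instance a semifield of fractions or a Grothendieck-style quotient) in which the asymptotic preorder becomes an honest preorder characterized by $\realspectrum$ and the required uniformity is automatic. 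I expect this uniformity-in-$n$ step to be the principal technical difficulty; the remaining work is a diagonal argument combining the obtained inequalities across $m$, $n$, and $\epsilon$.
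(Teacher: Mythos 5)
This theorem is not proved in the paper under review: it is quoted verbatim as an imported tool from \cite{vrana2020generalization}, so there is no in-paper argument to compare against. Evaluating your attempt on its own terms: the direction (ii)\,$\Rightarrow$\,(i) is fine, and so is the forward half of (ii), namely that $x\asymptoticge y$ forces $f(x)\ge f(y)$ by applying $f$ to $u^{k_n}x^n\preorderge y^n$, taking $n$-th roots and using $k_n/n\to 0$. The difficulty is the reverse half of (ii), which you explicitly flag as the ``principal technical difficulty'' and then leave open — and the sketch you give does not close.

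Concretely, your plan is to fix $\epsilon>0$, prove the \emph{literal} inequality $u^{\lceil\epsilon n\rceil}x^n\preorderge y^n$ for all large $n$, and diagonalize over $\epsilon=1/m$. But applying hypothesis~(i) to the pair $(u^{\lceil\epsilon n\rceil}x^n,\,y^n)$ only yields $N_n\,u^{\lceil\epsilon n\rceil}x^n\asymptoticge y^n$ with an $n$-dependent $N_n$, and this is still an \emph{asymptotic} relation, not a $\preorderle$-inequality. Unwinding it introduces a second, uncontrolled sublinear sequence, and absorbing $N_n$ into powers of $u$ via polynomial growth costs on the order of $n\cdot k(N_n)$ once you expand the outer $n$-th powers. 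Even in the friendliest case — a single application of (i) to $(x,y)$ itself, producing one fixed $N$ with $Nx\asymptoticge y$ — the passage $N^n\preorderle u^{k_0 n}$ contributes a \emph{linear} term in $n$ to the exponent of $u$, landing you at $u^{O(n)}x^n\preorderge y^n$, which is vacuous. So the obstacle you identify is not merely a uniformity-in-$n$ issue that compactness would fix; a direct invocation of (i) intrinsically leaks a $\log N$ per copy, and something qualitatively different is needed.

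The two rescue strategies you mention are asserted rather than demonstrated. Compactness of $\realspectrum(S,\preorderle)$ in the evaluation topology is not established (it does not follow from polynomial growth without further argument), and it is unclear how compactness would upgrade the asymptotic output of (i) to a literal $\preorderle$-inequality even granting it. The semifield-of-fractions route is indeed the one that makes this theorem work in \cite{vrana2020generalization}, but it involves a genuine change of ambient object (so that one may divide by $x$ and run a Strassen/Kadison--Dubois-style separation for $y/x$ against $1$), together with nontrivial work to transfer the preorder, the power universal element, and the spectrum across the localization. Writing ``the required uniformity is automatic'' defers exactly the content of the theorem. This is a genuine gap, not a bookkeeping omission.
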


The asymptotic preorder is not the only relaxation that can be investigated with methods based on monotone homomorphisms. In the recent work \cite{fritz2020local} Fritz has found sufficient conditions for catalytic and multi-copy transformations in terms of monotone homomorphisms into certain semirings including the real and tropical real numbers. We now state a special case of one of these results, specialized to our more restricted setting.
\begin{theorem}[{\cite[second part of 1.4. Theorem, special case]{fritz2020local}}]\label{thm:localglobal}
Let $S$ be a preordered semiring of polynomial growth with $0\preorderle 1$. Suppose that $x,y\in S\setminus\{0\}$ such that for all $f\in\testspectrum(S,\preorderle)$ the strict inequality $f(x)>f(y)$ holds. Then also the following hold:
\begin{enumerate}
\item there is a $k\in\naturals$ such that $u^kx^n\preorderge u^ky^n$ for every sufficiently large $n$
\item if in addition $x$ is power universal then $x^n\preorderge y^n$ for every sufficiently large $n$
\item there is a nonzero $a\in S$ such that $ax\preorderge ay$.
\end{enumerate}
\end{theorem}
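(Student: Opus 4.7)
The statement is declared a special case of Fritz's local--global principle \cite[Theorem~1.4]{fritz2020local}, so the most direct plan is to verify that our hypotheses fit that framework and invoke it; let me nonetheless sketch the structure of such a proof. The argument breaks into a compactness observation, a Positivstellensatz, and the deduction of the three conclusions.

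\emph{Compactness.} Polynomial growth bounds the size of the evaluation maps: for each $s\in S$ one has $s\preorderle u^{k(s)}$, hence $\ev_s(f)\le 2^{k(s)}$ on the tropical part (where $f(u)=2$ by normalization) and $\ev_s(f)\le f(u)^{k(s)}$ on the real part, with $f(u)$ itself bounded. Thus $\testspectrum(S,\preorderle)$ embeds as a closed subspace of a product of compact intervals via the evaluations, and is itself compact. Continuity of $\ev_x$ and $\ev_y$ together with the pointwise strict inequality $\ev_x>\ev_y$ then gives a uniform quantitative separation on the compact spectrum.

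\emph{Positivstellensatz (giving (i)).} The main obstacle is to turn this spectral gap into an actual semiring relation $u^kx^n\preorderge u^ky^n$. One argues by contradiction: if (i) failed, then for every $k$ and infinitely many $n$ the relation would not hold, and a Zorn-type argument on preorder extensions allows one to extract a monotone homomorphism $f$ into either $\nonnegativereals$ or $\tropicals$ that reverses or pinches the comparison between $x$ and $y$. Archimedean quotients produce the real-valued $f$ in one case; valuative tropical congruences produce the tropical $f$ in the other, normalized so that $f(u)=2$ and hence $f\in\testspectrum$. Either outcome contradicts the strict separation hypothesis, and this is the technical heart of \cite{fritz2020local}.

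\emph{Consequences.} Part (iii) follows cleanly from (i): pick $n$ with $u^kx^{n+1}\preorderge u^ky^{n+1}$ and set
\begin{equation*}
a \;:=\; u^k\sum_{j=0}^{n}x^{n-j}y^j.
\end{equation*}
Expanding the products, $ax$ and $ay$ share the $n$ intermediate summands $u^kx^{n+1-j}y^j$ for $j=1,\ldots,n$; what remains is $u^kx^{n+1}$ on the left and $u^ky^{n+1}$ on the right, so monotonicity of addition under $\preorderle$ yields $ax\preorderge ay$. Part (ii) requires re-running the Positivstellensatz with $x$ playing the role of the power universal element --- permitted by the hypothesis of (ii), observing that the rescaling of the tropical normalization preserves the strict inequalities $f(x)>f(y)$ --- and then using $u\preorderle x^\ell$ to absorb residual factors of $u^k$ into powers of $x$ in the final comparison. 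This absorption is the only remaining subtlety once (i) is in hand.
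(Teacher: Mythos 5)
The paper offers no proof of this statement at all---it is quoted verbatim as a special case of Fritz's Theorem 1.4 in \cite{fritz2020local}---and your operative move is the same: check that the hypotheses (polynomial growth, $0\preorderle 1$, strict separation on $\testspectrum(S,\preorderle)$) hold and invoke that citation, with your supplementary sketch of the compactness/Positivstellensatz core and the geometric-series catalyst $a=u^k\sum_{j=0}^{n}x^{n-j}y^j$ reproducing Fritz's own argument rather than anything proved in this paper. So the approaches coincide; the only soft spot in your sketch is part (ii), where ``absorbing'' residual factors $u^k$ into powers of $x$ via $u\preorderle x^\ell$ does not by itself give matching exponents $x^n\preorderge y^n$ and one genuinely needs the uniform spectral gap, but since you defer to \cite{fritz2020local} for the technical heart anyway this does not affect correctness.
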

In particular, the last condition means that $x$ may be catalytically transformed into $y$ with catalyst $a$ (in \cite{fritz2020local} a catalyst is given explicitly in terms of the $k$ above). We note that any of the listed conditions implies the non-strict inequalities $f(x)\ge f(y)$ for the monotone homomorphisms.

Despite the apparent similarity between the two results quoted above, there seems to be no simple way of reducing one to the other. In the following sections we will apply both in the context of box transformations and develop the results needed to do so in parallel. In particular, we will classify both the real and the tropical real valued monotones so that the implication in Theorem~\ref{thm:localglobal} can be made explicit, and also verify the condition of Theorem~\ref{thm:asymptotic} so that in the presence of non-strict inequalities between the monotones the characterization of the asymptotic preorder is still available.

\section{The semiring of boxes}\label{sec:boxsemiring}

We consider the number $m\in\naturals$, $m\ge 1$ fixed from now on. A \emph{box} is an $m+1$-tuple $(\rho_1,\ldots,\rho_m,\sigma)$ of positive operators on a finite dimensional Hilbert space $\mathcal{H}$ where $\support\sigma=\mathcal{H}$. We allow $\dim\mathcal{H}=0$ in which case there is a unique such tuple. Let us call the boxes $(\rho_1,\ldots,\rho_m,\sigma)$ and $(\rho_1',\ldots,\rho_m',\sigma')$ \emph{equivalent} when there is a unitary $U:\mathcal{H}\to\mathcal{H}'$ such that $\forall i:U\rho_i U^*=\rho_i'$ and $U\sigma U^*=\sigma'$. A box $(\rho_1,\ldots,\rho_m,\sigma)$ will be called \emph{classical} if all pairs of operators commute, and \emph{normalized} if $\Tr\rho_i=\Tr\sigma=1$ for all $i$. A classical box may be identified with a tuple $(p_1,\ldots,p_m,q)$ of measures on a common finite set $\mathcal{X}$ or with a tuple of diagonal positive operators on $\complexes^{\mathcal{X}}$.

We consider the semiring $\boxes{m}$ of equivalence classes of boxes where addition is induced by the direct sum and multiplication is induced by the tensor product. The zero element is the equivalence class of the unique box on any zero dimensional Hilbert space, and the unit is the equivalence class of the box $(1,1,\ldots,1)$ on the Hilbert space $\complexes$ (here we make the identification $\boundeds(\complexes)=\complexes$). We denote the set of equivalence classes of classical boxes by $\cboxes{m}$. $\cboxes{m}$ is a subsemiring of $\boxes{m}$.

We think of a box as a quantum system prepared via an unknown process (``black box''), with $\rho_1,\ldots,\rho_m,\sigma$ representing the possible states the system might be in. This point of view suggests that boxes should be compared by joint transformations, i.e. a box may be transformed into another box precisely when there is a stochastic map (quantum channel) that takes the $i$th state of the initial box into the $i$th state of the final box. This represents the ability of an experimenter to perform a physical process on the unknown state, resulting in a quantum system with different possible states. Formally, we write $(\rho_1,\ldots,\rho_m,\sigma)\preorderge_1(\rho_1',\ldots,\rho_m',\sigma')$ iff there is a channel $T$ such that $T(\rho_i)=\rho'_i$ for all $i$ and $T(\sigma)=\sigma'$. This defines a preorder on $\boxes{m}$, but it unfortunately does not satisfy the requirements of Theorem~\ref{thm:asymptotic} or Theorem~\ref{thm:localglobal}, in particular $0\not\preorderle_1 1$.
\begin{remark}
The general form of Theorem~\ref{thm:localglobal} from \cite[1.4. Theorem]{fritz2020local} does not require $0\preorderle 1$. However, $\preorderge_1$ would still have $m+1$ homomorphisms (the traces of each component) that stay constant under the transformations, preventing strict inequality for comparable pairs. A similar situation is covered in \cite[14.7. Theorem]{fritz2020local}, taking into account, intuitively, the infinitesimal neighborhood of \emph{one} such norm-like homomorphism. It is quite possible that an analogous result can be obtained that is able to handle multiple conserved values, and we expect this to be an interesting line of research.
\end{remark}

Following a similar route as in \cite{perry2020semiring}, we work instead with a relaxed preorder that ensures $0\preorderle 1$ and that generalizes the relative submajorization preorder defined for pairs of states in \cite{renes2016relative}:
\begin{definition}\label{def:boxpreorder}
Let $(\rho_1,\ldots,\rho_m,\sigma)$ and $(\rho_1',\ldots,\rho_m',\sigma')$ be boxes on $\mathcal{H}$ and $\mathcal{H}'$, respectively. We write $(\rho_1,\ldots,\rho_m,\sigma)\preorderge(\rho_1',\ldots,\rho_m',\sigma')$ iff there exists a completely positive trace non-increasing map $T:\boundeds(\mathcal{H})\to\boundeds(\mathcal{H}')$ such that the following inequalities hold (with the semidefinite partial order):
\begin{equation}\label{eq:boxpreorderdef}
\begin{aligned}
T(\rho_1) & \ge \rho_1'  \\
& \vdots \\
T(\rho_m) & \ge \rho_m'  \\
T(\sigma) & \le \sigma'
\end{aligned}
\end{equation}
\end{definition}
For $m=1$ the relation is identical to relative submajorization \cite[Definition 3]{renes2016relative}.

It should be noted that some of the relations in Definition~\ref{def:boxpreorder} can often be upgraded to equalities by an appropriate modification of the map $T$. Specifically, this is the case if $\Tr\sigma\ge\Tr\sigma'$:
\begin{proposition}\label{prop:boxpreorderstronger}
Let $(\rho_1,\ldots,\rho_m,\sigma)$ and $(\rho_1',\ldots,\rho_m',\sigma')$ be boxes such that the inequality $(\rho_1,\ldots,\rho_m,\sigma)\preorderge(\rho_1',\ldots,\rho_m',\sigma')$ holds. Then there exists a completely positive trace nonincreasing map $\tilde{T}:\boundeds(\mathcal{H})\to\boundeds(\mathcal{H}')$ such that the inequalities similar to \eqref{eq:boxpreorderdef} are satisfied and in addition
\begin{enumerate}
\item if $\Tr\sigma\ge\Tr\sigma'$ then $\tilde{T}(\sigma)=\sigma'$
\item if $\Tr\sigma=\Tr\sigma'$ then $\tilde{T}$ is trace preserving 
\end{enumerate}

Moreover, if $\Tr\rho_i\le\Tr\rho_i'$ for some $i$, then also $T(\rho_i)=\rho_i'$ for any map $T$ satisfying \eqref{eq:boxpreorderdef}.
\end{proposition}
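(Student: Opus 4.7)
The plan is to construct $\tilde T$ by adding to $T$ a suitable ``replacer'' term built from the slack $I-T^*(I)\ge 0$ of trace nonincreasingness and the positive semidefinite deficit $\Delta:=\sigma'-T(\sigma)\ge 0$. Concretely, I would define
\[
\tilde T(X) := T(X) + \Tr\bigl((I-T^*(I))X\bigr)\,\omega
\]
for a positive operator $\omega\in\boundeds(\mathcal{H}')$ to be chosen so that $\tilde T(\sigma)=\sigma'$ exactly. Since the second summand has the form $X\mapsto\Tr(KX)\omega$ with $K=I-T^*(I)\ge 0$ and $\omega\ge 0$, it is CP, so $\tilde T$ is CP; moreover it contributes a positive operator, so the inequalities $\tilde T(\rho_i)\ge T(\rho_i)\ge\rho_i'$ follow for free.

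The key step is to pin down $\omega$ and verify the trace bound. Setting $c:=\Tr\sigma-\Tr T(\sigma)\ge 0$, the identity $c-\Tr\Delta=\Tr\sigma-\Tr\sigma'$ combined with the hypothesis $\Tr\sigma\ge\Tr\sigma'$ yields $c\ge\Tr\Delta\ge 0$; this inequality is the only place where the hypothesis enters. Assuming $c>0$, the choice $\omega:=\Delta/c$ gives $\tilde T(\sigma)=T(\sigma)+c\omega=\sigma'$ and $\Tr\omega=\Tr\Delta/c\le 1$, whence
\[
\Tr\tilde T(X)=\Tr T(X)+\Tr\omega\bigl(\Tr X-\Tr T(X)\bigr)\le\Tr X
\]
shows that $\tilde T$ is trace nonincreasing, proving part~(i). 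For part~(ii) the equality $\Tr\sigma=\Tr\sigma'$ upgrades $c=\Tr\Delta$, so $\Tr\omega=1$ and the above estimate is an equality, making $\tilde T$ trace preserving. The degenerate case $c=0$ forces $\Tr\Delta=0$ and hence $\Delta=0$, so $T(\sigma)=\sigma'$ already and $\tilde T:=T$ works; if additionally $\Tr\sigma=\Tr\sigma'$, then $\Tr T(\sigma)=\Tr\sigma$ combined with $\support\sigma=\mathcal{H}$ and $I-T^*(I)\ge 0$ forces $T^*(I)=I$, so $T$ itself is already trace preserving.

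The final ``moreover'' clause is logically independent and falls out immediately from the given inequalities: $T(\rho_i)-\rho_i'\ge 0$ implies $\Tr T(\rho_i)\ge\Tr\rho_i'$, while trace nonincreasingness of $T$ gives $\Tr T(\rho_i)\le\Tr\rho_i\le\Tr\rho_i'$, so $T(\rho_i)-\rho_i'$ is a positive semidefinite operator of vanishing trace and therefore zero. I expect no conceptual obstacle: the main subtlety is realising that the naive correction $X\mapsto\Delta\,\Tr(X)/\Tr\sigma$ fails the trace bound and must be replaced by one weighted by the trace slack $I-T^*(I)$; once that is in place, handling the edge case $c=0$ requires only a short separate argument.
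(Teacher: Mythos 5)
Your construction is exactly the paper's: since $\Tr\bigl((I-T^*(I))X\bigr)=\Tr X-\Tr T(X)$, your $\tilde T$ with $\omega=(\sigma'-T(\sigma))/(\Tr\sigma-\Tr T(\sigma))$ coincides with the paper's map $\tilde T(X)=T(X)+[\Tr X-\Tr T(X)]\tau$ and the same choice of $\tau$, and the degenerate case $c=0$ and the ``moreover'' clause are handled the same way. The argument is correct and complete.
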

\begin{proof}
Let $T:\boundeds(\mathcal{H})\to\boundeds(\mathcal{H}')$ be a completely positive trace non-increasing map satisfying \eqref{eq:boxpreorderdef}. Define the map $\tilde{T}$ as
\begin{equation}
\tilde{T}(X)=T(X)+[\Tr X-\Tr T(X)]\tau
\end{equation}
for some $\tau\in\substates(\mathcal{H}')$ to be specified later. Then $\tilde{T}$ is a sum of completely positive maps, therefore also completely positive. It is also trace nonincreasing since
\begin{equation}
\Tr\tilde{T}(X)=\Tr T(X)+[\Tr X-\Tr T(X)]\Tr\tau\le\Tr T(X)+[\Tr X-\Tr T(X)]=\Tr X,
\end{equation}
which also shows that $\tilde{T}$ is trace preserving iff $\Tr\tau=1$. The inequalities involving $\rho_i$ are still satisfied because $\tau\ge 0$:
\begin{equation}
\tilde{T}(\rho_i)=T(\rho_i)+[\Tr \rho_i-\Tr T(\rho_i)]\tau\ge T(\rho_i)\ge\rho_i'.
\end{equation}

It remains to choose $\tau$ in such a way that \eqref{eq:boxpreorderdef} is satisfied. If $\Tr\sigma=\Tr T(\sigma)$ then $T(\sigma)\le\sigma'$ and $\Tr\sigma\ge\Tr\sigma'$ implies $T(\sigma)=\sigma'$, in which case any $\tau$ will do. Otherwise $\Tr T(\sigma)<\Tr\sigma$ and we can choose
\begin{equation}
\tau=\frac{\sigma'-T(\sigma)}{\Tr\sigma-\Tr T(\sigma)}.
\end{equation}

This choice ensures
\begin{equation}
\tilde{T}(\sigma)=T(\sigma)+[\Tr\sigma-\Tr T(\sigma)]\frac{\sigma'-T(\sigma)}{\Tr\sigma-\Tr T(\sigma)}=T(\sigma)+\sigma'-T(\sigma)=\sigma',
\end{equation}
and in addition $\Tr\tau=1$ iff $\Tr\sigma=\Tr\sigma'$.

For the last claim we only need to observe that if $T(\rho_i)\ge\rho_i'$ and $\Tr\rho_i\le\Tr\rho_i'$ then in fact $\Tr\rho_i=\Tr T(\rho_i)=\Tr\rho_i'$ and therefore also $T(\rho_i)=\rho_i'$.
\end{proof}

\begin{proposition}
$(\boxes{m})$ is a preordered semiring.
\end{proposition}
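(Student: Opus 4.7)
The proposition has three distinct assertions to verify: that the operations induced by direct sum and tensor product are well defined on equivalence classes and give a commutative semiring, that $\preorderge$ is a preorder, and that $\preorderge$ is compatible with $+$ and $\cdot$. The semiring axioms are completely routine once one checks that $\oplus$ and $\otimes$ of operators respect unitary equivalence; the zero element is the unique box on the zero-dimensional Hilbert space (which is absorbing for $\otimes$ since the tensor product with the trivial space is trivial) and the unit is $(1,\ldots,1)$ on $\complexes$, and commutativity, associativity and distributivity descend from the corresponding identities for operators up to the canonical unitaries that interchange tensor factors and reassociate direct sums. I would dispatch these in a single sentence.

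The core content is the preorder itself. For reflexivity, take $T=\id_{\boundeds(\mathcal{H})}$, which is CP and trace preserving, and check that $\rho_i \ge \rho_i$ and $\sigma \le \sigma$. For transitivity, suppose $(\rho_1,\ldots,\rho_m,\sigma) \preorderge (\rho'_1,\ldots,\rho'_m,\sigma')$ via $T$ and $(\rho'_1,\ldots,\rho'_m,\sigma') \preorderge (\rho''_1,\ldots,\rho''_m,\sigma'')$ via $T'$. Then $T'\circ T$ is CP and trace non-increasing (the composition of two such maps); moreover, since $T'$ is positive (indeed CP), applying $T'$ to the inequality $T(\rho_i)\ge \rho'_i$ yields $T'(T(\rho_i))\ge T'(\rho'_i)\ge \rho''_i$, and likewise $T'(T(\sigma)) \le T'(\sigma') \le \sigma''$.

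For the compatibility of $\preorderge$ with addition, assume $(\rho_1,\ldots,\rho_m,\sigma) \preorderge (\rho'_1,\ldots,\rho'_m,\sigma')$ via $T:\boundeds(\mathcal{H})\to \boundeds(\mathcal{H}')$, and let $(\omega_1,\ldots,\omega_m,\tau)$ be any box on $\mathcal{K}$. Consider the map $T \oplus \id_{\boundeds(\mathcal{K})}$ from $\boundeds(\mathcal{H}\oplus\mathcal{K})$ to $\boundeds(\mathcal{H}'\oplus\mathcal{K})$, sending a block-diagonal input to the direct sum of the images of its blocks. This is CP and trace non-increasing, and semidefinite inequalities between the $\rho_i$'s and the $\rho'_i$'s carry over to their direct sums with $\omega_i$, and likewise for $\sigma$ and $\sigma'$. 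The compatibility with multiplication is analogous using $T \otimes \id_{\boundeds(\mathcal{K})}$, which is again CP and trace non-increasing; the monotonicity of tensor product with respect to the semidefinite order, i.e. $A\ge B \ge 0$ and $C\ge 0 \implies A\otimes C \ge B\otimes C$, yields $T(\rho_i)\otimes\omega_i\ge \rho'_i \otimes\omega_i$ and $T(\sigma)\otimes\tau\le \sigma'\otimes\tau$ as required.

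No step is a real obstacle; the whole argument is essentially bookkeeping about CP maps. The only thing one should be careful about is the well-definedness of the chosen maps on equivalence classes: if the input boxes are replaced by unitarily conjugate representatives, one conjugates $T$ accordingly and the same properties hold, so the relation descends to $\boxes{m}$.
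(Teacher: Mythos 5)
Your proof is correct and follows essentially the same route as the paper, which likewise establishes compatibility with addition and multiplication via $T\otimes\id_{\boundeds(\mathcal{K})}$ and a block-diagonal map on $\boundeds(\mathcal{H}\oplus\mathcal{K})$; the one detail the paper makes explicit (and you leave implicit) is that this latter map $\tilde T$ sends $\begin{bmatrix}A&B\\C&D\end{bmatrix}\mapsto\begin{bmatrix}T(A)&0\\0&D\end{bmatrix}$, i.e.\ it must discard the off-diagonal blocks in order to be a well-defined completely positive trace-nonincreasing map on the full block-operator algebra. You are in fact more thorough than the paper, which omits the routine checks of well-definedness on equivalence classes, reflexivity and transitivity, and the semiring axioms, and restricts itself to the compatibility of $\preorderge$ with $+$ and $\cdot$.
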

\begin{proof}
We need to verify that the preorder is compatible with the semiring operations. Suppose that $(\rho_1,\ldots,\rho_m,\sigma)\preorderge(\rho_1',\ldots,\rho_m',\sigma')$ and let $T$ be a completely positive trace non-increasing map as in Definition~\ref{def:boxpreorder}. Let $(\omega_1,\ldots,\omega_m,\tau)\in\boxes{m}$ a box on $\mathcal{K}$. Then
\begin{equation}
\begin{aligned}
(T\otimes\id_{\boundeds(\mathcal{K})})(\rho_i\otimes\omega_i) & = T(\rho_i)\otimes\omega_i \ge \rho_1'\otimes\omega_i  \\
(T\otimes\id_{\boundeds(\mathcal{K})})(\sigma\otimes\tau) & = T(\sigma)\otimes\tau \le \sigma'\otimes\tau,
\end{aligned}
\end{equation}
therefore $(\rho_1,\ldots,\rho_m,\sigma)(\omega_1,\ldots,\omega_m,\tau)\preorderge(\rho_1',\ldots,\rho_m',\sigma')(\omega_1,\ldots,\omega_m,\tau)$.

The map $\tilde{T}:\boundeds(\mathcal{H}\oplus\mathcal{K})\to\boundeds(\mathcal{H}'\oplus\mathcal{K})$ defined as
\begin{equation}
\tilde{T}\left(\begin{bmatrix}
A & B  \\
C & D
\end{bmatrix}
\right)=\begin{bmatrix}
T(A) & 0  \\
0 & D
\end{bmatrix}
\end{equation}
is also completely positive and trace non-increasing, and satisfies
\begin{equation}
\begin{aligned}
\tilde{T}(\rho_i\oplus\omega_i) & = T(\rho_i)\oplus\omega_i \ge \rho_1'\oplus\omega_i  \\
\tilde{T}(\sigma\oplus\tau) & = T(\sigma)\oplus\tau \le \sigma'\oplus\tau,
\end{aligned}
\end{equation}
therefore $(\rho_1,\ldots,\rho_m,\sigma)+(\omega_1,\ldots,\omega_m,\tau)\preorderge(\rho_1',\ldots,\rho_m',\sigma')+(\omega_1,\ldots,\omega_m,\tau)$.
\end{proof}

\section{Classification of the monotone homomorphisms}\label{sec:homomorphisms}

We turn to the classification of monotone real and tropical real valued monotones. First we consider only classical boxes, relying heavily on the special structure of the semiring $\cboxes{m}$. We will see below that the box $u=(2,2,\ldots,2,1)$ on $\complexes$ is power universal. In this section we do not use this property but we choose this element for the normalization of the tropical real-valued monotontes. A classical box $(p_1,\ldots,p_m,q)$ on $\complexes^{\mathcal{X}}$ is characterized by the diagonal elements $(p_i)_x$ and $q_x$ ($i\in\{1,\ldots,m\}$, $x\in\mathcal{X}$).
\begin{theorem}
$\realspectrum(\cboxes{m},\preorderle)$ consists of the maps
\begin{equation}\label{eq:crealmonotones}
f_{\alpha,i}(p_1,\ldots,p_m,q)=\sum_{x\in\mathcal{X}}(p_i)_x^\alpha q_x^{1-\alpha}
\end{equation}
where $i\in\{1,\ldots,m\}$ and $\alpha\in[1,\infty)$.

The monotone homomorphisms from $\cboxes{m}$ to $\tropicals$ that send $u=(2,2,\ldots,2,1)$ to $2$ are the maps
\begin{equation}\label{eq:ctropicalmonotones}
f_{\infty,i}(p_1,\ldots,p_m,q)=\max_{x\in\mathcal{X}}\frac{(p_i)_x}{q_x}.
\end{equation}
\end{theorem}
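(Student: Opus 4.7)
The strategy is first to verify that the listed maps are monotone homomorphisms of the required types, and then to classify all such homomorphisms by restricting to boxes on $\complexes$ and pinning down their form via multiplicativity together with one critical non-scalar comparison.

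Verification is largely mechanical: additivity (resp.\ max-additivity) under $\oplus$ and multiplicativity under $\otimes$ of $f_{\alpha,i}$ and $f_{\infty,i}$ are immediate since tensor products and direct sums act componentwise on classical boxes. Monotonicity of $f_{\alpha,i}$ for $\alpha\ge 1$ follows from the data processing inequality for classical R\'enyi divergences together with the observation that $\rho\mapsto\Tr(\rho^\alpha\sigma^{1-\alpha})$ is semidefinite-monotone in $\rho$ and antitone in $\sigma$; combined with $T(\rho_i)\ge\rho_i'$ and $T(\sigma)\le\sigma'$ this yields the required inequality. The case of $f_{\infty,i}$ is analogous, and $f_{\infty,i}(u)=2$ is direct.

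For the converse, I restrict any monotone homomorphism $f$ to the submonoid of boxes on $\complexes$ and write $g(a_1,\ldots,a_m,b):=f((a_1,\ldots,a_m,b))$ on $\nonnegativereals^m\times\positivereals$. The tensor product of two such boxes is componentwise multiplication, so $g$ is a multiplicative homomorphism. A standard slicewise argument, using that a monotone multiplicative map $\positivereals\to\positivereals$ is of the form $c\mapsto c^\gamma$, yields $g(a_1,\ldots,a_m,b)=a_1^{\alpha_1}\cdots a_m^{\alpha_m}b^\beta$ on strictly positive inputs, with $\alpha_i\ge 0$ and $\beta\le 0$ imposed by the scalar preorder. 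Further constraints come from the bracketing $\lfloor c\rfloor\preorderle(c,\ldots,c,c)\preorderle\lceil c\rceil$: in the real case, $f(n)=n$ plus a squeeze argument gives $f((c,\ldots,c,c))=c$, hence $\sum\alpha_i+\beta=1$; in the tropical case, $f(n)=1$ for every $n\ge 1$ forces $\sum\alpha_i+\beta=0$. The normalization together with $u=(2,\ldots,2,1)\preorderge 2$ (witnessed by $T(1)=\tfrac12 I$) yields $\sum\alpha_i\ge 1$ in the real case and $\sum\alpha_i=1$ in the tropical case, while $(0,\ldots,0,1)\preorderle 0$ forces $g(0,\ldots,0,1)=0$, ruling out the degenerate possibility that every $\alpha_i$ vanishes.

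The critical step, and the main obstacle, is to show that at most one $\alpha_j$ is nonzero. For this, I will take on $\complexes^m$ the box $z$ with $\rho_i^z=(m+1)\,e_ie_i^*$ and $\sigma^z=I$, and the box $w=x_1+\cdots+x_m$ where $x_i$ is the one-dimensional box $(1,\ldots,1,2,1,\ldots,1,1)$ with the $2$ in the $i$-th $\rho$-slot. Letting $D_j$ denote the diagonal matrix with $2$ at position $j$ and $1$ elsewhere, one has $\rho_j^w=D_j$ and $\sigma^w=I$. The classical channel $T(M):=\tfrac{1}{m+1}\sum_{j=1}^m M_{jj}\,D_j$ is completely positive and trace preserving, and since $\sum_j D_j=(m+1)I$, a direct computation gives $T(\rho_i^z)=\rho_i^w$ and $T(\sigma^z)=\sigma^w$; hence $z\preorderge w$. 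On the other hand, $z$ decomposes into one-dimensional summands $z_i=(0,\ldots,0,m+1,0,\ldots,0,1)$, each of which has $a_j=0$ for every $j\ne i$. If two indices $k\ne\ell$ satisfied $\alpha_k,\alpha_\ell>0$, then every $g(z_i)$ would vanish, giving $f(z)=0$, whereas $f(w)=\sum_i 2^{\alpha_i}$ in the real case (resp.\ $\max_i 2^{\alpha_i}$ in the tropical case) is strictly positive, contradicting $f(z)\ge f(w)$. Therefore exactly one $\alpha_j$ is nonzero, and the preceding bookkeeping forces $g(a_1,\ldots,a_m,b)=a_j^\alpha b^{1-\alpha}$ with some $\alpha\ge 1$ in the real case and $g(a_1,\ldots,a_m,b)=a_j/b$ in the tropical case. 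Finally, decomposing any classical box as a direct sum of one-dimensional boxes and applying $f$-additivity (resp.\ max-additivity) recovers precisely the formulas \eqref{eq:crealmonotones} and \eqref{eq:ctropicalmonotones}. The bulk of the nontrivial work lies in constructing and verifying the witness $z\preorderge w$; the remaining steps are a direct specialization of standard character-theoretic arguments.
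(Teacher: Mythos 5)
Your central uniqueness step is correct and takes a genuinely different route from the paper. Where the paper proves that at most one exponent $\alpha_i$ can be nonzero by deriving joint convexity of $f$ on one-dimensional boxes and analyzing the Hessian at $(1,\ldots,1)$ (real case), respectively joint quasiconvexity and a second-derivative argument (tropical case), you construct an explicit witness: I checked that $T(M)=\tfrac{1}{m+1}\sum_j M_{jj}D_j$ is completely positive and trace preserving, that $\sum_j D_j=(m+1)I$ gives $T(\rho_i^z)=\rho_i^w$ and $T(\sigma^z)=\sigma^w$, hence $z\preorderge w$, and that this forces a contradiction when two exponents are positive. This is more elementary, avoids any differentiability bookkeeping, and treats the real and tropical cases uniformly, whereas the paper needs two separate (quasi)convexity arguments. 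The remaining skeleton (restriction to one-dimensional boxes, power functions from monotone multiplicativity, normalization via comparison with the naturals and with $u$) matches the paper's.

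There is, however, a gap in how you handle zero entries, i.e.\ boxes with unequal supports, which the statement does cover and to which the paper devotes a separate continuity argument around \eqref{eq:approximatefullsupport}. Your formula $g(a_1,\ldots,a_m,b)=\prod_i a_i^{\alpha_i}b^\beta$ is established only for strictly positive entries, yet you evaluate $g$ at boxes with zeros twice. In the contradiction step this is harmless but should be said: $g(z_i)=0$ follows because the identity map gives $(\epsilon,\ldots,\epsilon,m+1,\epsilon,\ldots,\epsilon,1)\preorderge z_i$, so $g(z_i)\le\epsilon^{\sum_{k\ne i}\alpha_k}(m+1)^{\alpha_i}\to 0$. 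The real problem is your final sentence: when you decompose a general classical box into one-dimensional summands, you need the value of $g$ at summands with $(p_j)_x>0$ (the distinguished index) but $(p_k)_x=0$ for some $k\ne j$, and there you need the \emph{lower} bound $g\ge a_j^\alpha b^{1-\alpha}$. Monotonicity goes the wrong way for this, and no completely positive trace-nonincreasing map can witness such a box dominating a full-support one, since it must send the zero component to zero; so nothing in your write-up yields this bound. The gap is fixable, e.g.\ by adjoining a small full-support summand: the trace map shows $(a_1,\ldots,a_m,b)+(\eta,\ldots,\eta,\eta)\preorderge(a_1+\eta,\ldots,a_m+\eta,b+\eta)$, whence $g(a_1,\ldots,a_m,b)+\eta\ge(a_j+\eta)^\alpha(b+\eta)^{1-\alpha}$ (with $\max$ instead of $+$ in the tropical case), and letting $\eta\to 0$ gives the missing bound; alternatively one can adapt the paper's $\epsilon$-mixing squeeze. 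Some such argument must be added for the classification to cover all of $\cboxes{m}$.
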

\begin{proof}
From the expressions above it is clear that $f_{\alpha,i}$ is a semiring-homomorphism into $\nonnegativereals$ when $\alpha<\infty$ and into $\tropicals$ when $\alpha=\infty$. The maps $f_{\alpha,i}$ are related to the R\'enyi divergences as $f_{\alpha,i}(p_1,\ldots,p_m,q)=2^{(\alpha-1)\relativeentropy[\alpha]{p_i}{q}}$ when $\alpha\in[1,\infty)$ and $f_{\infty,i}(p_1,\ldots,p_m,q)=2^{\relativeentropy[\infty]{p_i}{q}}$, therefore monotone under relations of the form
\begin{equation}
(p_1,\ldots,p_m,q)\preorderge(T(p_1),\ldots,T(p_m),T(q)),
\end{equation}
where $T$ is completely positive and trace preserving (by the data processing inequality). They are also monotone under projections onto subsets of $\mathcal{X}$, since every term in the sum is nonnegative (and since we take the maximum over $\mathcal{X}$). These two operations generate every completely positive trace nonincreasing map. Finally, one verifies that the maps are increasing in $p_i$ and decreasing in $q$.

We show that these are the only elements of the spectrum. Let $f:\cboxes{m}\to\nonnegativereals$ or $f:\cboxes{m}\to\tropicals$ be a monotone homomorphism and consider the functions $g(x)=f(x,\ldots,x,x)$ and $h_i(y)=f(1,\ldots,1,y,1,\ldots,1)$ (with $y$ at the $i$th position), where $x,y\in\positivereals$ and the arguments are one-dimensional boxes. $g$ and $h_i$ inherit the multiplicativity of $f$ and are monotone increasing ($h_i$ essentially by definition, while if $0\le x_1<x_2$ then the map $T=\frac{x_1}{x_2}\id_{\boundeds(\complexes)}$ shows that $(x_1,\ldots,x_1)\preorderle(x_2,\ldots,x_2)$). This implies that $g(x)=x^\beta$ and $h_i(y)=y^{\alpha_i}$ for some $\alpha_i,\beta\ge0$. In addition, the map $x\mapsto f(1,\ldots,1,x)$ is monotone decreasing, therefore $\beta-\sum_{i=1}^m\alpha_i\le 0$.

From this point we reason for the two types of homomorphisms separately, the most obvious difference being that the value of $\beta$ depends on the type. For elements in the real part of the spectrum we derive a convexity condition that is necessary for a homomorphism to be monotone, while for the tropical part it is replaced by quasi-convexity. With hindsight one can see that the formally weaker joint quasiconvexity constraint already excludes every combination of the exponents that is not allowed by joint convexity, but we find it instructive to include both arguments.

Consider first a homomorphism $f$ into $\nonnegativereals$ and boxes $(p_1,\ldots,p_m,q)$ with full support ($\support p_1=\cdots=\support p_m=\support q=\complexes^{\mathcal{X}}$). We have
\begin{equation}
\begin{split}
f(p_1,\ldots,p_m,q)
 & = \sum_{x\in\mathcal{X}}f((p_1)_x,\ldots,(p_m)_x,q_x)  \\
 & = \sum_{x\in\mathcal{X}}g(q_x)\prod_{i=1}^m h_i\left(\frac{(p_i)_x}{q_x}\right)  \\
 & = \sum_{x\in\mathcal{X}}q_x^{\beta-\sum_{i=1}^m\alpha_i}\prod_{i=1}^m(p_i)_x^{\alpha_i}.
\end{split}
\end{equation}
In $\cboxes{m}$ the elements $(2,2,\ldots,2)$ and $1+1=(I_2,\ldots,I_2)$ are equivalent in the sense that $(2,2,\ldots,2)\preorderge(I_2,\ldots,I_2)$ (choose $T(x)=x\frac{I_2}{2}$) and $(I_2,\ldots,I_2)\preorderge(2,2,\ldots,2)$ (choose $T(x)=\Tr x$). Applying monotonicity and additivity we get $g(2)=2g(1)$, and therefore $\beta=1$.

Let $p_1,\ldots,p_m,q,p_1',\ldots,p_m',q'>0$ and $\lambda\in(0,1)$. Choosing $T=\Tr$ we see that
\begin{multline}\label{eq:convexcombination}
\left(\begin{bmatrix}
\lambda p_1 & 0  \\
0 & (1-\lambda)p_1'
\end{bmatrix},\ldots,\begin{bmatrix}
\lambda p_m & 0  \\
0 & (1-\lambda)p_m'
\end{bmatrix},\begin{bmatrix}
\lambda q & 0  \\
0 & (1-\lambda)q'
\end{bmatrix}
\right)  \\  \preorderge(\lambda p_1+(1-\lambda)p_1',\ldots,\lambda p_m+(1-\lambda)p_m',\lambda q+(1-\lambda)q'),
\end{multline}
and therefore
\begin{multline}
\lambda f(p_1,\ldots,p_m,q)+(1-\lambda)f(p_1',\ldots,p_m',q')  \\
 = f(\lambda p_1,\ldots,\lambda p_m,\lambda q)+f((1-\lambda)p_1',\ldots,(1-\lambda)p_m',(1-\lambda)q')  \\
 \ge f(\lambda p_1+(1-\lambda)p_1',\ldots,\lambda p_m+(1-\lambda)p_m',\lambda q+(1-\lambda)q'),
\end{multline}
i.e. $f$ is jointly convex on boxes on $\complexes$ (thought of as a map $\positivereals^{m+1}\to\positivereals$). With the abbreviation $\delta=(1-\sum_i\alpha_i)$ its Hesse matrix at $(p_1,\ldots,p_m,q)=(1,\ldots,1)$ is
\begin{equation}
\begin{bmatrix}
\alpha_1^2-\alpha_1 & \alpha_1\alpha_2 & \cdots & \alpha_1\alpha_m & \alpha_1\delta  \\
\alpha_2\alpha_1 & \alpha_2^2-\alpha_2 & \cdots & \alpha_2\alpha_m & \alpha_2\delta  \\
\vdots & \ddots & \ddots & \ddots & \vdots  \\
\alpha_m\alpha_1 & \alpha_m\alpha_2 & \cdots & \alpha_m^2-\alpha_m & \alpha_m\delta  \\
\delta\alpha_1 & \delta\alpha_2 & \cdots & \delta\alpha_m & \delta^2-\delta
\end{bmatrix}=A-D,
\end{equation}
where $A=\begin{bmatrix}
\alpha_1 & \cdots & \alpha_m & \delta
\end{bmatrix}^T\cdot\begin{bmatrix}
\alpha_1 & \cdots & \alpha_m & \delta
\end{bmatrix}$ and $D$ is a diagonal matrix with entries $\alpha_1,\ldots,\alpha_m,\delta$. The difference must be positive semidefinite. Since $A$ has rank $1$, $D$ can have at most one strictly positive eigenvalue. If there are none, then $\delta=1>0$, a contradiction. Therefore there is a unique index $i$ such that $\alpha_i>0$ and for $i'\neq i$ we have $\alpha_{i'}=0$. From the condition $1-\alpha_i=\delta\le 0$ we get $\alpha_i\ge 1$, i.e. \eqref{eq:crealmonotones} is the only possible form.

Now let $f$ be a homomorphism into $\tropicals$. Then for boxes with full support we have
\begin{equation}
\begin{split}
f(p_1,\ldots,p_m,q)
 & = \max_{x\in\mathcal{X}}f((p_1)_x,\ldots,(p_m)_x,q_x)  \\
 & = \max_{x\in\mathcal{X}}g(q_x)\prod_{i=1}^m h_i\left(\frac{(p_i)_x}{q_x}\right)  \\
 & = \max_{x\in\mathcal{X}}q_x^{\beta-\sum_{i=1}^m\alpha_i}\prod_{i=1}^m(p_i)_x^{\alpha_i}.
\end{split}
\end{equation}
Comparing $(2,2,\ldots,2)$ and $1+1$ again, we now get $g(2)=g(1)$, and therefore $\beta=0$. From the normalization condition $f(u)=2$ we get $\sum_{i=1}^m\alpha_i=1$. Applying $f$ to \eqref{eq:convexcombination} results in the inequality
\begin{multline}
\max\{f(p_1,\ldots,p_m,q), f(p_1',\ldots,p_m',q)\}  \\
 \ge f(\lambda p_1+(1-\lambda)p_1',\ldots,\lambda p_m+(1-\lambda)p_m',\lambda q+(1-\lambda)q'),
\end{multline}
i.e. this time $f$ is jointly quasiconvex on one-dimensional boxes (again, as a map $\positivereals^{m+1}\to\positivereals$). In particular, if we restrict $f$ to a line segment then it is not possible to have a zero directional derivative and negative second derivative (strict local maximum). Suppose that there are two distinct indices $i<j$ such that $\alpha_i,\alpha_j>0$. We consider the point $(1,\ldots,1,1)$ and the direction $(0,\ldots,0,\alpha_j,0,\ldots,0,-\alpha_i,0,\ldots,0)$, where $\alpha_j$ is the $i$th component and $-\alpha_i$ is the $j$th component. The derivaties are
\begin{equation}
\left.\frac{\ed}{\ed s}f(1,\ldots,1,1+s\alpha_j,1,\ldots,1,1-s\alpha_i,1,\ldots,1)\right|_{s=0}=\alpha_i\alpha_j+\alpha_j(-\alpha_i)=0
\end{equation}
\begin{multline}
\left.\frac{\ed^2}{\ed s^2}f(1,\ldots,1,1+s\alpha_j,1,\ldots,1,1-s\alpha_i,1,\ldots,1)\right|_{s=0}  \\
 = \alpha_j^2\alpha_i(\alpha_i-1)+\alpha_i^2\alpha_j(\alpha_j-1)+2\alpha_j(-\alpha_i)\alpha_i\alpha_j
 = -\alpha_i\alpha_j(\alpha_i+\alpha_j)<0,
\end{multline}
a contradiction. Thus there is only one nonzero $\alpha_i$ which, by normalization, has to be $1$.

Finally, the extension to general classical boxes with possibly unequal supports follows from a continuity argument as in \cite{perry2020semiring}: if $(p_1,\ldots,p_m,q)$ is any classical box, then we have
\begin{multline}\label{eq:approximatefullsupport}
(p_1+\epsilon q,\ldots,p_m+\epsilon q,q)\preorderge(p_1,\ldots,p_m,q)  \\  \preorderge\left(\epsilon\frac{\norm[1]{p_1}}{\norm[1]{q}}q+(1-\epsilon)p_1,\ldots,\epsilon\frac{\norm[1]{p_1}}{\norm[1]{q}}q+(1-\epsilon)p_1,q\right)
\end{multline}
for every $\epsilon\in(0,1)$ (in the first inequality choosing $T=\id$, in the second one $T(x)=\epsilon\Tr(x)\frac{q}{\norm[1]{q}}+(1-\epsilon)x$ in the definition). Let $f$ be an element of the $\testspectrum(\boxes{c},\preorderle)$ and apply to \eqref{eq:approximatefullsupport} to get an upper and a lower bound on $f(p_1,\ldots,p_m,q)$. The bounds are of the form found above since all the supports are now equal to $\support q$ and we can see that they converge to the same value as $\epsilon\to 0$.
\end{proof}

We turn to the semiring of quantum boxes, using that any monotone homomorphism on $\boxes{m}$ must restrict to a monotone homomorphism on $\cboxes{m}$. As in \cite{perry2020semiring}, we show that there is only one possible extension. In that case it was possible to reduce the evaluation at a quantum pair to a classical one using the pinching map.  However, that argument needs to be modified because it is not possible to transform a quantum box of multiple states into a classical one using a pinching map alone, since different pinched states need not commute with each other. 

The pinching map is defined as follows (see \cite{hayashi2002optimal} or \cite[Section 2.6.3]{tomamichel2015quantum}). Let $\mathcal{H}$ be a finite dimensional Hilbert space and $A\in\boundeds(\mathcal{H})$ a normal operator with spectral decomposition
\begin{equation}
A=\sum_{\lambda\in\spectrum(A)}\lambda P_\lambda,
\end{equation}
where $(P_\lambda)_{\lambda\in\complexes}$ are pairwise disjoint orthogonal projections summing to $I$. The pinching map $\pinching{A}:\boundeds(\mathcal{H})\to\boundeds(\mathcal{H})$ is defined as
\begin{equation}
\pinching{A}(X)=\sum_{\lambda\in\spectrum(A)}P_{\lambda}XP_{\lambda}.
\end{equation}
Any operator in the image commutes with $A$ and for $X\ge0$ it satisfies the pinching inequality $|\spectrum(A)|\pinching{A}(X)\ge X$.

\begin{theorem}\label{thm:qmonotones}
$\realspectrum(\boxes{m},\preorderle)$ consists of the maps
\begin{equation}
\tilde{f}_{\alpha,i}(\rho_1,\ldots,\rho_m,\sigma)=\Tr\left(\sigma^{\frac{1-\alpha}{2\alpha}}\rho_i\sigma^{\frac{1-\alpha}{2\alpha}}\right)^\alpha
\end{equation}
where $i=1,2,\ldots,m$ and $\alpha\in[1,\infty)$.

The monotone homomorphisms from $\boxes{m}$ to $\tropicals$ that send $u=(2,2,\ldots,2,1)$ to $2$ are the maps
\begin{equation}
\tilde{f}_{\infty,i}(\rho_1,\ldots,\rho_m,\sigma)=\norm[\infty]{\sigma^{-1/2}\rho_i\sigma^{-1/2}}.
\end{equation}
\end{theorem}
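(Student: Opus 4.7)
\emph{Proof plan.} First I would verify that the listed maps $\tilde{f}_{\alpha,i}$ (for $\alpha\in[1,\infty)$) and $\tilde{f}_{\infty,i}$ do lie in the respective parts of the spectrum. Additivity under direct sums is immediate from the block-diagonal form of $(\sigma\oplus\sigma')^{(1-\alpha)/(2\alpha)}$, while multiplicativity under tensor products follows from $(\sigma\otimes\sigma')^{(1-\alpha)/(2\alpha)}=\sigma^{(1-\alpha)/(2\alpha)}\otimes{\sigma'}^{(1-\alpha)/(2\alpha)}$ together with multiplicativity of the Schatten $\alpha$-power of a tensor product. Monotonicity under the preorder of Definition~\ref{def:boxpreorder} follows by combining the data processing inequality for the sandwiched R\'enyi (resp.\ max) divergence with the separate monotonicity of the expressions in $\rho_i$ (increasing) and $\sigma$ (decreasing in the operator order). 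The normalization $\tilde{f}_{\infty,i}(u)=2$ for $u=(2,\ldots,2,1)$ is immediate.

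For uniqueness, any element of $\realspectrum(\boxes{m},\preorderle)$ or of the tropical part of $\testspectrum(\boxes{m},\preorderle)$ restricts to a monotone homomorphism on $\cboxes{m}$, and by the previous theorem must therefore equal $f_{\alpha,i}$ (resp.\ $f_{\infty,i}$) for some unique $\alpha$ and $i$. The task is then to extend this identification to arbitrary quantum boxes. As noted in the text preceding the theorem, one cannot simultaneously make all $\rho_j$'s classical via a single pinching map, so the pair-based approach of \cite{perry2020semiring} must be adapted. The key ingredient is precisely that the classical monotone $f_{\alpha,i}$ depends only on $(\rho_i,\sigma)$, so it will suffice to pinch only the $i$-th slot with respect to $\sigma$.

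For the upper bound on $f(\rho_1,\ldots,\rho_m,\sigma)$, I would dominate each $\rho_j$ with $j\neq i$ by a scalar multiple of $\sigma$: set $M_j=\norm[\infty]{\sigma^{-1/2}\rho_j\sigma^{-1/2}}$, so $M_j\sigma\ge\rho_j$, and use $T=\id$ to witness $C:=(M_1\sigma,\ldots,\rho_i,\ldots,M_m\sigma,\sigma)\preorderge(\rho_1,\ldots,\rho_m,\sigma)$, giving $f(B)\le f(C)$. Passing to $C^{\otimes n}$ and applying the pinching inequality only to the $i$-th slot (again with $T=\id$) produces a box $C'\preorderge C^{\otimes n}$ whose $i$-th component is $|\spectrum(\sigma^{\otimes n})|\,\pinching{\sigma^{\otimes n}}(\rho_i^{\otimes n})$ while the other non-$\sigma$ components remain scalar multiples of $\sigma^{\otimes n}$. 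Every component of $C'$ commutes with $\sigma^{\otimes n}$ and so $C'$ is classical, whence $f(C')=f_{\alpha,i}(C')=|\spectrum(\sigma^{\otimes n})|^{\alpha}\cdot 2^{(\alpha-1)\relativeentropy[\alpha]{\pinching{\sigma^{\otimes n}}(\rho_i^{\otimes n})}{\sigma^{\otimes n}}}$, since the non-$i$ components are ignored by $f_{\alpha,i}$. Taking $n$-th roots, the combinatorial factor $|\spectrum(\sigma^{\otimes n})|^{\alpha/n}$ tends to $1$ by polynomial growth of the spectrum, and Hayashi's asymptotic pinching formula delivers $f(C)\le\tilde{f}_{\alpha,i}(\rho_i,\sigma)$.

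The matching lower bound is obtained by replacing each $\rho_j$ with $j\neq i$ by $0$: the identity witnesses $(\rho_1,\ldots,\rho_m,\sigma)\preorderge(0,\ldots,0,\rho_i,0,\ldots,0,\sigma)$, and pinching the $i$-th slot in the $n$-th tensor power yields a classical box whose value under $f=f_{\alpha,i}$ gives, after taking $n$-th roots and invoking the same asymptotic pinching formula, $f(B)\ge\tilde{f}_{\alpha,i}(\rho_i,\sigma)$. Combining the two bounds concludes the real case. The tropical case proceeds identically, with $\sum$ replaced by $\max$ in the classical expressions and with the limiting quantity being $\norm[\infty]{\sigma^{-1/2}\rho_i\sigma^{-1/2}}$. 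Throughout, the central obstacle is the non-commutativity of the different pinched $\rho_j$'s, and it is defused precisely by the pairwise nature of the classical monotones established above.
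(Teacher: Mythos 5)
Your overall strategy is the same as the paper's: verify the candidate maps via data processing, restrict an arbitrary monotone homomorphism to $\cboxes{m}$ to identify it with some $f_{\alpha,i}$ (resp.\ $f_{\infty,i}$), and then exploit the pairwise nature of the classical monotones so that only the $i$-th slot needs to be pinched, squeezing the quantum value between two classical evaluations whose $n$-th roots converge to the sandwiched quantity. Your upper bound, dominating the other components by $M_j\sigma$ rather than by $c^nI^{\otimes n}$ as in the paper, is an immaterial variation.

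The step to look at again is the lower bound. You compare with $(0,\ldots,0,\pinching{\sigma^{\otimes n}}(\rho_i^{\otimes n}),0,\ldots,0,\sigma^{\otimes n})$, whose non-$i$ components are identically zero, and evaluate $f$ there via the classical classification. But the classical theorem's proof determines $f$ on boxes with unequal supports by sandwiching between boxes with nonzero components, and no such box lies below a box with a vanishing component: a completely positive map sends $0$ to $0$, so every box dominated by yours again has a zero in slot $j$, and only the upper half of the sandwich survives. So asserting that $f$ equals $f_{\alpha,i}$ at your comparison box needs an extra argument — for instance, direct-sum it with a classical box of full support and use additivity to force the value (in the tropical case, choose the summand with strictly smaller $f_{\infty,i}$-value), or avoid zero components altogether. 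The paper does the latter: it pads every slot with $\ketbra{0}{0}$ on an auxiliary direction and weights the $\sigma$-slot there by a constant $C$, so that the comparison box is classical with all components nonzero, and the spurious contribution ($C^{1-\alpha}$, resp.\ a maximum with a constant) cancels by additivity. With that repair, or with the paper's padding, your argument goes through and otherwise coincides with the published proof.
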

\begin{proof}
It is readily verified that the expressions above give homomorphisms into the respective semirings, monotone by the data processing inequality for the sandwiched (or minimal) R\'enyi divergences \cite{mosonyi2015quantum}. We will show that these are the only possible extensions of the elements of $\testspectrum(\cboxes{m},\preorderle)$ to $\boxes{m}$.

Let $\alpha\in[1,\infty)$ and $i\in\{1,\ldots,m\}$, and let $\tilde{f}$ be any extension of $f_{\alpha,i}$. For a large enough $c\in\positivereals$ and every $n\in\naturals$ we have from the pinching inequality
\begin{equation}\label{eq:qextensionupper}
(c^nI^{\otimes n},\ldots,c^nI^{\otimes n},|\spectrum(\sigma^{\otimes n})|\pinching{\sigma^{\otimes n}}(\rho_i^{\otimes n}),c^nI^{\otimes n},\ldots,c^nI^{\otimes n},\sigma^{\otimes n})\preorderge(\rho_1^{\otimes n},\ldots,\rho_m^{\otimes n},\sigma^{\otimes n}),
\end{equation}
where the left hand side is classical, therefore
\begin{equation}\label{eq:pinchingupper}
\begin{split}
\tilde{f}(\rho_1,\ldots,\rho_m,\sigma)
 & \le \sqrt[n]{f_{\alpha,i}(c^nI^{\otimes n},\ldots,c^nI^{\otimes n},|\spectrum(\sigma^{\otimes n})|\pinching{\sigma^{\otimes n}}(\rho_i^{\otimes n}),c^nI^{\otimes n},\ldots,c^nI^{\otimes n},\sigma^{\otimes n}}  \\
 & = \sqrt[n]{|\spectrum(\sigma^{\otimes n})|^\alpha\Tr\pinching{\sigma^{\otimes n}}(\rho_i^{\otimes n})^\alpha(\sigma^{\otimes n})^{1-\alpha}}
\end{split}
\end{equation}
For a matching lower bound we consider the inequality ($\ket{0}$ is an additional orthogonal direction)
\begin{multline}\label{eq:qextensionlower}
(\ketbra{0}{0}\oplus\rho_1^{\otimes n},\ldots,\ketbra{0}{0}\oplus\rho_m^{\otimes n},C\ketbra{0}{0}\oplus\sigma^{\otimes n})  \\  \preorderge(\ketbra{0}{0},\ldots,\ketbra{0}{0},\ketbra{0}{0}\oplus\pinching{\sigma^{\otimes n}}(\rho_i^{\otimes n}),\ketbra{0}{0},\ldots,\ketbra{0}{0},C\ketbra{0}{0}\oplus\sigma^{\otimes n}),
\end{multline}
which follows from the definition with $T=\id_{\complexes\ketbra{0}{0}}\oplus\pinching{\sigma^{\otimes n}}$. The right hand side is classical, therefore
\begin{equation}
\begin{split}
C^{1-\alpha}+\tilde{f}(\rho_1,\ldots,\rho_m,\sigma)^n
 & \ge \Tr(\ketbra{0}{0}\oplus\pinching{\sigma^{\otimes n}}(\rho_i^{\otimes n}))^\alpha(\ketbra{0}{0}\oplus\sigma^{\otimes n})^{1-\alpha}  \\
 & = C^{1-\alpha}+\Tr\pinching{\sigma^{\otimes n}}(\rho_i^{\otimes n})^\alpha(\sigma^{\otimes n})^{1-\alpha},
\end{split}
\end{equation}
which leads to
\begin{equation}\label{eq:pinchinglower}
\tilde{f}(\rho_1,\ldots,\rho_m,\sigma) \ge \sqrt[n]{\Tr\pinching{\sigma^{\otimes n}}(\rho_i^{\otimes n})^\alpha(\sigma^{\otimes n})^{1-\alpha}}.
\end{equation}
From \eqref{eq:pinchingupper} and \eqref{eq:pinchinglower} and using $|\spectrum(\sigma^{\otimes n})|^{\alpha/n}\to 1$ we get
\begin{equation}
\tilde{f}(\rho_1,\ldots,\rho_m,\sigma) \ge \lim_{n\to\infty}\sqrt[n]{\Tr\pinching{\sigma^{\otimes n}}(\rho_i^{\otimes n})^\alpha(\sigma^{\otimes n})^{1-\alpha}}=\Tr\left(\sigma^{\frac{1-\alpha}{2\alpha}}\rho\sigma^{\frac{1-\alpha}{2\alpha}}\right)^\alpha,
\end{equation}
where the last equality follows from \cite[Proposition 4.12.]{tomamichel2015quantum} (see also \cite[Theorem 4.4.]{perry2020semiring}).

Consider now an extension $\tilde{f}:\boxes{m}\to\tropicals$ of $f_{\infty,i}$. From \eqref{eq:qextensionupper} we get
\begin{equation}
\tilde{f}(\rho_1,\ldots,\rho_m,\sigma)\le\sqrt[n]{\norm[\infty]{(\sigma^{\otimes n})^{-1/2}|\spectrum(\sigma^{\otimes n})|\pinching{\sigma^{\otimes n}}(\rho_i^{\otimes n})(\sigma^{\otimes n})^{-1/2}}}
\end{equation}
and from \eqref{eq:qextensionlower} we get
\begin{equation}
\max\{C,\tilde{f}(\rho_1,\ldots,\rho_m,\sigma)^n\}\ge\max\{C,\norm[\infty]{(\sigma^{\otimes n})^{-1/2}\pinching{\sigma^{\otimes n}}(\rho_i^{\otimes n})(\sigma^{\otimes n})^{-1/2}}\}.
\end{equation}
For small enough $C$ the maximum equals the second argument on both sides, therefore
\begin{equation}
\tilde{f}(\rho_1,\ldots,\rho_m,\sigma)\ge\sqrt[n]{\norm[\infty]{(\sigma^{\otimes n})^{-1/2}\pinching{\sigma^{\otimes n}}(\rho_i^{\otimes n})(\sigma^{\otimes n})^{-1/2}}}.
\end{equation}
The upper and lower bounds converge as $n\to\infty$ and single out the unique quantum max-divergence (see \cite{datta2009min} and \cite[Section 4.2.4]{tomamichel2015quantum}), i.e.
\begin{equation}
\tilde{f}(\rho_1,\ldots,\rho_m,\sigma)=2^{\maxrelativeentropy{\rho_i}{\sigma}}=\norm[\infty]{\sigma^{-1/2}\rho_i\sigma^{-1/2}}.
\end{equation}
\end{proof}

\section{Conditions for catalytic, multi-copy, and asymptotic relative submajorization}\label{sec:transformations}

We now specialize Theorem~\ref{thm:localglobal} to the preordered semiring $\boxes{m}$. First we verify the polynomial growth condition by exhibiting a power universal element.
\begin{proposition}
The box $u=(2,2,\ldots,2,1)$ on $\complexes$ is power universal.
\end{proposition}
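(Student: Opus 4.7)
The definition of power universal has two ingredients: $u \preorderge 1$, and for every nonzero $x \in \boxes{m}$ some $k \in \naturals$ with $x \preorderle u^k$ and $1 \preorderle u^k x$. The first is immediate: the identity $T = \id_{\boundeds(\complexes)}$ is completely positive and trace-preserving, and it sends the entry $2$ in each of the first $m$ slots to $2 \ge 1$ and the entry $1$ in the $\sigma$-slot to $1 \le 1$.

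For the second ingredient I would fix a nonzero box $x = (\rho_1, \ldots, \rho_m, \sigma)$ on a Hilbert space $\mathcal{H}$ and exploit the invertibility of $\sigma$, which is guaranteed because $\support \sigma = \mathcal{H}$. Note that $u^k$ represents the box $(2^k, \ldots, 2^k, 1)$ on $\complexes$. For $u^k \preorderge x$, any completely positive map $T : \boundeds(\complexes) \to \boundeds(\mathcal{H})$ has the form $T(z) = z A$ with $A \ge 0$, and is trace-nonincreasing iff $\Tr A \le 1$. I would take $A = \lambda \sigma$ with $\lambda = \min(1, 1/\Tr \sigma)$, so that $A \le \sigma$ and $\Tr A \le 1$ are automatic; invertibility of $\sigma$ then furnishes a finite constant $C := \max_i \norm[\infty]{\sigma^{-1/2} \rho_i \sigma^{-1/2}}$ with $\rho_i \le C \sigma$, and any $k$ with $2^k \lambda \ge C$ yields $T(2^k) = 2^k \lambda \sigma \ge \rho_i$ for every $i$. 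For $1 \preorderle u^k x$, note that $u^k x \cong (2^k \rho_1, \ldots, 2^k \rho_m, \sigma)$ on $\mathcal{H}$, and any trace-nonincreasing CP map $T : \boundeds(\mathcal{H}) \to \complexes$ is of the form $T(X) = \Tr(X M)$ with $0 \le M \le I$. Setting $M = I / \max(1, \Tr \sigma)$ makes $\Tr(\sigma M) \le 1$ hold, and provided $\Tr \rho_i > 0$ for each $i$, a large enough $k$ makes $2^k \Tr(\rho_i M) \ge 1$ simultaneously for all $i$; taking the maximum of the two $k$'s finishes the argument.

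The main obstacle I foresee is the case where some $\rho_i = 0$: since every such CP trace-nonincreasing map into $\complexes$ sends $0$ to $0$, the requirement $T(2^k \rho_i) \ge 1$ is unachievable, and no $k$ works for the lower bound. Either the ``nonzero'' hypothesis in the definition of polynomial growth is to be read as excluding such boxes (for example by insisting all $\rho_i \neq 0$, equivalently $x$ is not preorder-equivalent to $0$ from below), or this edge case needs a separate argument that I would have to flesh out; apart from this, the construction above is uniform in $x$ and uses only the invertibility of $\sigma$ together with the scalar structure of $\boundeds(\complexes)$.
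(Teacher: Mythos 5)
Your construction matches the paper's in essence. For $u^k\preorderge x$ both proofs take a CP map of the scalar form $z\mapsto zA$ with $A$ proportional to $\sigma$, and for $1\preorderle u^k x$ both take a trace-type map $X\mapsto\Tr(XM)$ with $M$ proportional to $I$; the only difference is cosmetic. You fix the proportionality constants once ($\lambda=\min(1,1/\Tr\sigma)$, $M=I/\max(1,\Tr\sigma)$) and let the factor $2^k$ absorb the remaining size gap, whereas the paper writes $T_1(x)=2^{-k_1/2}x\sigma$ and $T_2(x)=2^{-k_2/2}\Tr x/\Tr\sigma$ and tunes $k_1,k_2$ so that the maps themselves become admissible; these parametrizations are interchangeable. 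Your explicit constant $C=\max_i\norm[\infty]{\sigma^{-1/2}\rho_i\sigma^{-1/2}}$ is exactly what the paper leaves implicit in the phrase ``since $\support\rho_i\subseteq\support\sigma$''.

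The edge case you flag is genuine and is present, unremarked, in the paper's own proof: the step claiming $1\le T_2(2^{k_2}\rho_i)=2^{k_2/2}\Tr\rho_i/\Tr\sigma$ ``can also be ensured by choosing $k_2$ large enough'' tacitly uses $\Tr\rho_i>0$. If $\rho_i=0$ for some $i$, then every linear map $T$ into $\boundeds(\complexes)$ gives $T(2^k\rho_i)=0$, so $1\preorderle u^k x$ is unachievable for any $k$. One caveat about your proposed fix: ``all $\rho_i\neq 0$'' and ``$x$ is not preorder-equivalent to $0$'' are not the same condition. Since $0\preorderge x$ holds iff \emph{every} $\rho_i=0$ (and $x\preorderge 0$ always holds), a box like $(\rho,0,\ldots,0,\sigma)$ with $\rho\neq 0$ and $m\ge 2$ is nonzero, is not preorder-equivalent to $0$, and still breaks power universality. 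So the clause ``for every nonzero $x$'' in the polynomial-growth definition does not by itself rescue the proposition; the intended reading must be the stronger convention that a box has $\rho_i\neq 0$ for every $i$ (equivalently, one works in the subsemiring of such boxes together with the true zero). With that convention, your proof and the paper's coincide and both go through.
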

\begin{proof}
Choosing $T=\id_{\boundeds(\complexes)}$ in Definition~\ref{def:boxpreorder} we verify that $u\preorderge 1$.

Let $(\rho_1,\ldots,\rho_m,\sigma)$ be a box on $\mathcal{H}$. We first find $k_1\in\naturals$ such that $u^{k_1}\preorderge(\rho_1,\ldots,\rho_m,\sigma)$. Let $T_1:\boundeds(\complexes)\to\boundeds(\mathcal{H})$ be the map $T_1(x)=2^{-k_1/2}x\sigma$. By choosing $k_1$ large enough we can ensure that $T_1$ is a completely positive trace nonincreasing map. It satisfies $T_1(1)=2^{-k_1/2}\sigma\le\sigma$ and $T_1(2^{k_1})=2^{k_1/2}\sigma$, which is greater than $\rho_i$ when $k_1$ is large, since $\support\rho_i\subseteq\support\sigma$.

To find a $k_2$ such that $u^{k_2}(\rho_1,\ldots,\rho_m,\sigma)\preorderge 1$ we consider the map $T_2:\boundeds(\mathcal{H})\to\boundeds(\complexes)$ given by $T_2(x)=2^{-k_2/2}\frac{\Tr x}{\Tr\sigma}$. This is completely positive and also trace nonincreasing provided that $k_2$ is large enough. By construction, $T_2(\sigma)=2^{-k_2/2}\le 1$. The remaining inequalities
\begin{equation}
1\le T_2(2^{k_2}\rho_i)=2^{k_2/2}\frac{\Tr\rho_i}{\Tr\sigma}
\end{equation}
can also be ensured by choosing $k_2$ large enough.

With $k=\max\{k_1,k_2\}$ we have both $u^k\preorderge(\rho_1,\ldots,\rho_m,\sigma)$ and $u^k(\rho_1,\ldots,\rho_m,\sigma)\preorderge 1$
\end{proof}
Note that in addition to being a power universal element, the box $u$ is also invertible: its multiplicative inverse is the box $u^{-1}=(\frac{1}{2},\ldots,\frac{1}{2},1)$. A consequence is that the implications of Theorem~\ref{thm:localglobal} can be simplified in that one may choose $k=0$ or equivalently, there is no need to assume that $x$ is power universal. Together with Theorem~\ref{thm:qmonotones} this leads to the following condition.
\begin{corollary}
Let $(\rho_1,\ldots,\rho_m,\sigma)$ and $(\rho'_1,\ldots,\rho'_m,\sigma')$ be elements of $\boxes{m}$. Suppose that for every $i\in\{1,\ldots,m\}$ and $\alpha\in[1,\infty)$ the inequalities
\begin{equation}
\Tr\left(\sigma^{\frac{1-\alpha}{2\alpha}}\rho_i\sigma^{\frac{1-\alpha}{2\alpha}}\right)^\alpha>\Tr\left({\sigma'}^{\frac{1-\alpha}{2\alpha}}\rho'_i{\sigma'}^{\frac{1-\alpha}{2\alpha}}\right)^\alpha
\end{equation}
as well as
\begin{equation}
\norm[\infty]{\sigma^{-1/2}\rho_i\sigma^{-1/2}}>\norm[\infty]{{\sigma'}^{-1/2}\rho'_i{\sigma'}^{-1/2}}
\end{equation}
hold. Then
\begin{enumerate}
\item for every sufficiently large $n$ there is a completely positive trace nonincreasing map $T$ such that
\begin{equation}
\begin{aligned}
\forall i\in\{1,\ldots,m\}:T(\rho_i^{\otimes n}) & \ge {\rho'}^{\otimes n}  \\
T(\sigma^{\otimes n}) & \le {\sigma'}^{\otimes n},
\end{aligned}
\end{equation}
\item there is a box $(\tau_1,\ldots,\tau_m,\omega)$ and a completely positive trace nonincreasing map $T$ such that
\begin{equation}
\begin{aligned}
\forall i\in\{1,\ldots,m\}:T(\rho_i\otimes\tau_i) & \ge \rho'_i\otimes\tau_i  \\
T(\sigma_i\otimes\omega_i) & \ge \sigma'_i\otimes\omega_i.
\end{aligned}
\end{equation}
\end{enumerate}
\end{corollary}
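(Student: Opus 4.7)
The plan is to apply Theorem~\ref{thm:localglobal} directly to the preordered semiring $(\boxes{m},\preorderle)$ with $x=(\rho_1,\ldots,\rho_m,\sigma)$ and $y=(\rho'_1,\ldots,\rho'_m,\sigma')$. I would first verify the two standing hypotheses of that theorem: polynomial growth is given by the preceding Proposition, with power universal element $u=(2,\ldots,2,1)$, and $0\preorderle 1$ is witnessed by the zero map from $\complexes$ to the trivial Hilbert space, which is completely positive and trace non-increasing while all the semidefinite constraints of Definition~\ref{def:boxpreorder} become vacuous on the codomain side.

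The remaining input to Theorem~\ref{thm:localglobal} is strict positivity of $f(x)-f(y)$ for every $f\in\testspectrum(\boxes{m},\preorderle)$. By Theorem~\ref{thm:qmonotones}, the real part of this spectrum consists exactly of the maps $\tilde{f}_{\alpha,i}$ with $\alpha\in[1,\infty)$ and $i\in\{1,\ldots,m\}$, and the tropical part (under the normalization $f(u)=2$) consists of the maps $\tilde{f}_{\infty,i}$. The two families of strict inequalities in the hypothesis of the corollary are exactly the statements $\tilde{f}_{\alpha,i}(x)>\tilde{f}_{\alpha,i}(y)$ for every $\alpha\in[1,\infty]$ and every $i$, so Theorem~\ref{thm:localglobal} applies and delivers three conclusions: (a) $u^k x^n\preorderge u^k y^n$ for some fixed $k\in\naturals$ and all sufficiently large $n$; (b) a version of (a) with $k=0$ if $x$ is power universal; (c) a nonzero catalyst $a\in\boxes{m}$ with $ax\preorderge ay$.

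Part 2 of the corollary is immediate from (c): writing the catalyst as $a=(\tau_1,\ldots,\tau_m,\omega)$ and unfolding Definition~\ref{def:boxpreorder} produces the required completely positive trace non-increasing map $T$ on $\mathcal{H}\otimes\mathcal{K}$. For part 1 I would exploit that the power universal element $u$ is invertible in $\boxes{m}$, with inverse the one-dimensional box $u^{-1}=(\tfrac{1}{2},\ldots,\tfrac{1}{2},1)$ satisfying $u\cdot u^{-1}=1$. Multiplying both sides of the inequality in (a) by $u^{-k}$ and using compatibility of the preorder with multiplication yields $x^n\preorderge y^n$ for every sufficiently large $n$, which unpacks by Definition~\ref{def:boxpreorder} into the desired map $T$ on $\mathcal{H}^{\otimes n}$.

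I do not anticipate any genuine obstacle, since the structural ingredients---polynomial growth, the full classification of $\testspectrum(\boxes{m},\preorderle)$, and the invertibility of $u$---have already been assembled. The only point worth flagging is the final cancellation step, which is precisely what allows one to bypass the power-universality assumption on $x$ appearing in conclusion (ii) of Theorem~\ref{thm:localglobal} and to obtain the clean statement $x^n\preorderge y^n$ (rather than $u^k x^n\preorderge u^k y^n$) in part 1 of the corollary.
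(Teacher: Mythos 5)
Your proof is correct and follows exactly the route the paper takes: apply Theorem~\ref{thm:localglobal} to $(\boxes{m},\preorderle)$, use the classification of $\testspectrum(\boxes{m},\preorderle)$ from Theorem~\ref{thm:qmonotones} to translate the hypotheses into the strict inequalities $f(x)>f(y)$, and exploit the invertibility of $u$ (with $u^{-1}=(\tfrac12,\ldots,\tfrac12,1)$) to cancel the $u^k$ factor and recover $x^n\preorderge y^n$ for large $n$ without assuming $x$ power universal. The paper is terse here---it merely remarks on the invertibility of $u$ before stating the corollary---and your write-up fills in precisely the steps that remark is alluding to.
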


Our next goal is to apply Theorem~\ref{thm:asymptotic}. We start with a simple lemma.
\begin{lemma}\label{lem:CPUmap}
Let $A\in\boundeds(\mathcal{H})$, $A'\in\boundeds(\mathcal{H}')$, $A,A'\ge 0$ and $\norm[\infty]{A'}\le\norm[\infty]{A}$. Then there exists a completely positive unital map $\Phi:\boundeds(\mathcal{H})\to\boundeds(\mathcal{H}')$ such that $\Phi(A)\ge\Phi(A')$.
\end{lemma}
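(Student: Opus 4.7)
The plan is to exhibit $\Phi$ explicitly as a map that collapses every input onto a scalar via a top eigenvector of $A$ and then embeds that scalar as a multiple of the identity on $\mathcal{H}'$. I read the inequality in the conclusion as $\Phi(A)\ge A'$, since $A'\in\boundeds(\mathcal{H}')$ is not in the domain of $\Phi$ and so $\Phi(A')$ as literally written is not defined.

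First I would set $\lambda=\norm[\infty]{A}$ and pick a unit vector $\ket{v}\in\mathcal{H}$ with $A\ket{v}=\lambda\ket{v}$, which exists because $A\ge 0$ on a finite-dimensional Hilbert space attains its operator norm as an eigenvalue. Then define
\begin{equation}
\Phi(X)=\bra{v}X\ket{v}\,I_{\mathcal{H}'}.
\end{equation}
The verification breaks into three direct checks. Unitality is immediate: $\Phi(I_{\mathcal{H}})=\braket{v}{v}\,I_{\mathcal{H}'}=I_{\mathcal{H}'}$. Complete positivity follows because $\Phi$ factors as the single-Kraus map $X\mapsto\bra{v}X\ket{v}$ from $\boundeds(\mathcal{H})$ to $\complexes$ followed by the CP unital embedding $z\mapsto z\,I_{\mathcal{H}'}$. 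The domination inequality is
\begin{equation}
\Phi(A)=\lambda\,I_{\mathcal{H}'}\ge\norm[\infty]{A'}\,I_{\mathcal{H}'}\ge A',
\end{equation}
where the first step uses the hypothesis $\norm[\infty]{A'}\le\norm[\infty]{A}=\lambda$ and the second is the standard bound $B\le\norm[\infty]{B}\,I$ for any positive operator $B$.

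There is really no hard step here; the lemma amounts to the observation that the CP unital map which sends every operator to its expectation value in a top eigendirection of $A$, times the identity on $\mathcal{H}'$, automatically dominates every positive operator on $\mathcal{H}'$ of no larger operator norm. The only choice being made is the eigenvector $\ket{v}$, and its role is precisely to turn $\Phi(A)$ into a scalar multiple of $I_{\mathcal{H}'}$ with the largest possible prefactor consistent with $\Phi$ being CP unital.
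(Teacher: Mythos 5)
Your proof is correct and is essentially identical to the paper's: both take a unit top eigenvector $\psi$ of $A$ and define $\Phi(X)=\bra{\psi}X\ket{\psi}I$, then conclude $\Phi(A)=\norm[\infty]{A}I\ge\norm[\infty]{A'}I\ge A'$. Your reading of the conclusion as $\Phi(A)\ge A'$ (fixing the typo $\Phi(A')$ in the statement) is also exactly what the paper's own proof establishes and what is used later in the paper.
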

\begin{proof}
Let $\psi$ be an eigenvector of $A$ with eigenvalue $\norm[\infty]{A}$ with $\norm{\psi}=1$. Let $\Phi(X)=\bra{\psi}X\ket{\psi}I$. This map is completely positive and unital and satisfies
\begin{equation}
\Phi(A)=\bra{\psi}A\ket{\psi}I=\norm[\infty]{A}I\ge\norm[\infty]{A'}I\ge A'.
\end{equation}
\end{proof}

\begin{proposition}
Let $(\rho_1,\ldots,\rho_m,\sigma)$ and $(\rho_1',\ldots,\rho_m',\sigma)$ be boxes and suppose that
\begin{equation}
\frac{\tilde{f}_{i,\alpha}(\rho'_1,\ldots,\rho'_m,\sigma')}{\tilde{f}_{i,\alpha}(\rho_1,\ldots,\rho_m,\sigma)}
\end{equation}
is bounded for every $i$ as $\alpha\to\infty$. Then there exists an $r\in\mathbb{N}$ such that $(\rho'_1,\ldots,\rho'_m,\sigma')\asymptoticle r\cdot(\rho_1,\ldots,\rho_m,\sigma)$.
\end{proposition}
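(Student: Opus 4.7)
The proposition is exactly condition~(i) of Theorem~\ref{thm:asymptotic} specialized to $\boxes{m}$, so, combined with Theorem~\ref{thm:qmonotones}, it will yield the full characterization of the asymptotic preorder by sandwiched R\'enyi divergences. My plan is to reduce the multi-state problem to the pair case ($m=1$) treated in \cite{perry2020semiring} and then combine the resulting pairwise transformations by uniform averaging.

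The first step is to upgrade the hypothesis to a uniform bound. For each fixed $i$ the function $\alpha\mapsto\tilde{f}_{i,\alpha}(\rho'_1,\ldots,\rho'_m,\sigma')/\tilde{f}_{i,\alpha}(\rho_1,\ldots,\rho_m,\sigma)$ is continuous on $[1,\infty)$ and bounded as $\alpha\to\infty$ by assumption, hence bounded on $[1,\infty)$; taking the maximum over the finite index set $\{1,\ldots,m\}$ produces a single constant $C$ valid for all $i$ and $\alpha\ge 1$. For each individual $i$ the pair $(\rho_i,\sigma),(\rho'_i,\sigma')\in\boxes{1}$ then satisfies bounded ratio of every real-valued monotone of $\boxes{1}$, since these monotones are precisely the $\tilde{f}_{i,\alpha}$ with $\alpha\in[1,\infty)$. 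Invoking the pair-case version of condition~(i) of Theorem~\ref{thm:asymptotic} from \cite{perry2020semiring} then provides $r_i\in\naturals$, a sublinear sequence $k_{i,n}$, and CPTNI maps $S_{i,n}:\boundeds((\complexes^{r_i})^{\otimes n}\otimes\mathcal{H}^{\otimes n})\to\boundeds({\mathcal{H}'}^{\otimes n})$ with $S_{i,n}(2^{k_{i,n}}I_{r_i^n}\otimes\rho_i^{\otimes n})\ge{\rho'_i}^{\otimes n}$ and $S_{i,n}(I_{r_i^n}\otimes\sigma^{\otimes n})\le{\sigma'}^{\otimes n}$.

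Next I will set $r=\max_i r_i$, fix isometries $W_i:\complexes^{r_i^n}\hookrightarrow\complexes^{r^n}$, and pull each $S_{i,n}$ back through $V_{i,n}=W_i\otimes I_{\mathcal{H}^{\otimes n}}$ to obtain CPTNI maps $\tilde{S}_{i,n}(X)=S_{i,n}(V_{i,n}^*XV_{i,n})$ on the common domain $\complexes^{r^n}\otimes\mathcal{H}^{\otimes n}$. The mixture $T_n=\frac{1}{m}\sum_{i=1}^m\tilde{S}_{i,n}$ is then CPTNI. Immediately $T_n(I_{r^n}\otimes\sigma^{\otimes n})\le{\sigma'}^{\otimes n}$, since every summand satisfies this bound. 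Setting $k'_n=\max_j k_{j,n}+\lceil\log_2 m\rceil$, which remains sublinear, the $i=j$ term in the mixture yields $\frac{1}{m}\tilde{S}_{j,n}(2^{k'_n}I_{r^n}\otimes\rho_j^{\otimes n})\ge\frac{1}{m}\cdot 2^{k'_n-k_{j,n}}\cdot{\rho'_j}^{\otimes n}\ge{\rho'_j}^{\otimes n}$, while the remaining $m-1$ terms are nonnegative. Thus $T_n$ witnesses $u^{k'_n}(r(\rho_1,\ldots,\sigma))^n\preorderge(\rho'_1,\ldots,\sigma')^n$ for every $n$, which by definition is $(\rho'_1,\ldots,\sigma')\asymptoticle r(\rho_1,\ldots,\sigma)$.

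The main obstacle is that a pairwise CPTNI map $S_{i,n}$ attuned to $\rho_i\mapsto\rho'_i$ carries no a priori useful lower bound on the images of $\rho_j$ for $j\ne i$, so no single pairwise transformation can serve as the multi-state transformation. The uniform mixture $\frac{1}{m}\sum_i\tilde{S}_{i,n}$ resolves this at the cost of a $\frac{1}{m}$ prefactor, which is absorbed into the power-universal correction by the constant shift $\lceil\log_2 m\rceil$ of the sublinear $k_n$. Matching dimensions across the varying $r_i$ via isometric embedding into $\complexes^{r^n}$ is a secondary bookkeeping step.
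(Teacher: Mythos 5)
Your proof is correct, but it takes a genuinely different route from the paper's. The paper's argument uses the $\alpha\to\infty$ limit of the hypothesis \emph{only} to extract the max-divergence comparison $\norm[\infty]{{\sigma'}^{-1/2}\rho'_i{\sigma'}^{-1/2}}\le\norm[\infty]{\sigma^{-1/2}\rho_i\sigma^{-1/2}}$, then applies Lemma~\ref{lem:CPUmap} together with sandwiching by $\sigma^{\pm1/2},{\sigma'}^{\pm1/2}$ to produce, for each $i$, a completely positive map $\Phi_i$ with $\Phi_i(\sigma)=\sigma'$ \emph{exactly} and $\Phi_i(\rho_i)\ge\rho'_i$; choosing $r\ge\max_i\norm[1-1]{\Phi_i}$ and setting $T_n=\frac{1}{mr^n}\sum_i\Phi_i^{\otimes n}$ then witnesses the relation with a \emph{constant} power-universal correction of order $\log m$. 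You instead first upgrade the hypothesis to a uniform bound over $\alpha\in[1,\infty)$ by continuity, restrict to the pair semiring $\boxes{1}$ (noting that $\tilde f_{i,\alpha}$ restricted to the $i$th pair is exactly the $\boxes{1}$ monotone), and invoke the $m=1$ case to obtain CPTNI maps $S_{i,n}$ with genuinely $n$-dependent sublinear corrections $k_{i,n}$, which you then combine by isometric embedding and a uniform mixture — the same averaging device as the paper, with the $1/m$ loss absorbed by an additive $\lceil\log_2 m\rceil$ in the correction. Both routes are sound; the paper's is self-contained, yields a constant correction, and gives explicit maps, while yours is more modular at the cost of relying on the pair-case result as a black box. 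Two small points of precision for your write-up: \cite{perry2020semiring} establishes condition~(ii) of Theorem~\ref{thm:asymptotic} for $\boxes{1}$, so condition~(i) — what you actually invoke — follows via the equivalence stated in that theorem, and that deduction should be made explicit; and the embedding isometries $W_i:\complexes^{r_i^n}\hookrightarrow\complexes^{r^n}$ depend on $n$ and should carry an $n$ subscript.
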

\begin{proof}
Under the assumptions of the proposition,
\begin{equation}
\begin{split}
\infty
 & > \lim_{\alpha\to\infty}\log\frac{\tilde{f}_{i,\alpha}(\rho'_1,\ldots,\rho'_m,\sigma')}{\tilde{f}_{i,\alpha}(\rho_1,\ldots,\rho_m,\sigma)}  \\
 & = \lim_{\alpha\to\infty}(\alpha-1)\left(\sandwicheddivergence{\alpha}{\rho'_i}{\sigma'}-\sandwicheddivergence{\alpha}{\rho_i}{\sigma}\right).
\end{split}
\end{equation}
The limit of the first factor is $\infty$, therefore the limit of the second factor (which is known to exist) must be at most $0$, i.e. $\sandwicheddivergence{\infty}{\rho'_i}{\sigma'}\le\sandwicheddivergence{\infty}{\rho_i}{\sigma}$. Using the explicit form of the R\'enyi divergence of order $\infty$ we conclude
\begin{equation}
\norm[\infty]{{\sigma'}^{-1/2}\rho'_i{\sigma'}^{-1/2}}\le\norm[\infty]{{\sigma}^{-1/2}\rho_i{\sigma}^{-1/2}}.
\end{equation}
Let $\tilde{\Phi}_i:\boundeds(\mathcal{H})\to\boundeds(\mathcal{H}')$ be a completely positive unital map such that $\tilde{\Phi}_i({\sigma}^{-1/2}\rho_i{\sigma}^{-1/2})\ge{\sigma'}^{-1/2}\rho'_i{\sigma'}^{-1/2}$ (from Lemma~\ref{lem:CPUmap}) and consider the maps
\begin{equation}
\Phi_i(X)={\sigma'}^{1/2}\tilde{\Phi}_i(\sigma^{-1/2}X\sigma^{-1/2}){\sigma'}^{1/2}.
\end{equation}
These are completely positive and satisfy $\Phi_i(\sigma)=\sigma'$ and $\Phi_i(\rho_i)\ge\rho'_i$. Choose $r\in\naturals$ such that
\begin{equation}
r\ge\max_i\norm[1-1]{\Phi_i}
\end{equation}
and let
\begin{equation}
T_n=\frac{1}{mr^n}\sum_{i=1}^m\Phi_i^{\otimes n}.
\end{equation}
For every $n$ this map is trace nonincreasing and therefore
\begin{equation}
\begin{split}
u^{\lfloor\log m\rfloor}\left(r\cdot(\rho_1,\ldots,\rho_m,\sigma)\right)^n
 & \preorderge (mr^n\rho_1^{\otimes n},\ldots,mr^n\rho_m^{\otimes n},r^n\sigma^{\otimes n})  \\
 & \preorderge (T_n(mr^n\rho_1^{\otimes n}),\ldots,T_n(mr^n\rho_m^{\otimes n}),T_n(r^n\sigma^{\otimes n}))  \\
 & \preorderge (\Phi_1^{\otimes n}(\rho_1^{\otimes n}),\ldots,\Phi_m^{\otimes n}(\rho_m^{\otimes n}),T_n(r^n\sigma^{\otimes n}))  \\
 & \preorderge (\rho'_1,\ldots,\rho'_m,\sigma').
\end{split}
\end{equation}
This proves that $r\cdot(\rho_1,\ldots,\rho_m,\sigma)\asymptoticge(\rho'_1,\ldots,\rho'_m,\sigma')$.
\end{proof}

Together with Theorem~\ref{thm:asymptotic} the last proposition and the classification in Theorem~\ref{thm:qmonotones} leads to the following explicit condition.
\begin{corollary}\label{cor:asymptotic}
Let $(\rho_1,\ldots,\rho_m,\sigma)$ and $(\rho'_1,\ldots,\rho'_m,\sigma')$ be elements of $\boxes{m}$. The following are equivalent:
\begin{enumerate}
\item for every $i\in\{1,\ldots,m\}$ and $\alpha\in[1,\infty)$ the inequalities
\begin{equation}
\Tr\left(\sigma^{\frac{1-\alpha}{2\alpha}}\rho_i\sigma^{\frac{1-\alpha}{2\alpha}}\right)^\alpha\ge\Tr\left({\sigma'}^{\frac{1-\alpha}{2\alpha}}\rho'_i{\sigma'}^{\frac{1-\alpha}{2\alpha}}\right)^\alpha
\end{equation}
hold,
\item $(\rho_1,\ldots,\rho_m,\sigma)\asymptoticge(\rho'_1,\ldots,\rho'_m,\sigma')$.
\end{enumerate}
\end{corollary}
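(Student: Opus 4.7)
The plan is to combine the preceding Proposition with Theorem~\ref{thm:asymptotic} and the classification in Theorem~\ref{thm:qmonotones} to read off the equivalence. Writing $x=(\rho_1,\ldots,\rho_m,\sigma)$ and $y=(\rho'_1,\ldots,\rho'_m,\sigma')$, Theorem~\ref{thm:qmonotones} identifies $\realspectrum(\boxes{m},\preorderle)$ with the family $\{\tilde f_{\alpha,i}\}_{i\in\{1,\ldots,m\},\,\alpha\in[1,\infty)}$, so the inequalities listed in item~(i) of the corollary are exactly the condition that $f(x)\ge f(y)$ for every $f\in\realspectrum(\boxes{m},\preorderle)$.

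The implication from (ii) to (i) is immediate from the general fact that monotone semiring homomorphisms respect the asymptotic preorder: if $u^{k_n}x^n\preorderge y^n$ with $k_n=o(n)$, then applying $\tilde f_{\alpha,i}$, taking $n$th roots, and using $\tilde f_{\alpha,i}(u)\ge 1$ yields $\tilde f_{\alpha,i}(x)\ge \tilde f_{\alpha,i}(y)$ in the limit.

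For the converse my plan is to verify condition~(i) of Theorem~\ref{thm:asymptotic} and then invoke the theorem. That condition demands that whenever $\ev_y/\ev_x$ is bounded on the real spectrum, some $n\in\naturals$ witnesses $nx\asymptoticge y$. Pointwise boundedness on the whole spectrum implies in particular that $\tilde f_{\alpha,i}(y)/\tilde f_{\alpha,i}(x)$ stays bounded as $\alpha\to\infty$ for each fixed $i$, which is precisely the hypothesis of the preceding Proposition; that proposition supplies the required $n=r$. With condition~(i) established, Theorem~\ref{thm:asymptotic}(ii) gives the claimed equivalence.

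Because the analytic heart of the argument already sits in the preceding Proposition, I do not expect a substantive obstacle at the corollary level; the single point of care is the degenerate case $\rho_i=0$, where $\tilde f_{\alpha,i}(x)=0$ and the ratio is undefined. In that situation the inequality hypothesis of item~(i) forces $\tilde f_{\alpha,i}(y)=0$, hence $\rho'_i=0$, and the corresponding slot contributes nothing to either side of the comparison and can be removed before applying the Proposition. Once this reduction is made, the three-step argument (identify the spectrum, verify the boundedness hypothesis, apply Theorem~\ref{thm:asymptotic}) closes the proof.
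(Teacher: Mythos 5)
Your proposal is correct and takes essentially the same route as the paper, which states the corollary as an immediate consequence of exactly these three ingredients: the classification of the real spectrum in Theorem~\ref{thm:qmonotones}, the verification of condition~(i) of Theorem~\ref{thm:asymptotic} via the preceding proposition, and Theorem~\ref{thm:asymptotic} itself (whose condition~(ii) already contains the direction that the asymptotic preorder implies the monotone inequalities). Your explicit treatment of the degenerate case $\rho_i=0$ is a minor point the paper leaves implicit, and your handling of it is sound.
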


\section{Application to state discrimination}\label{sec:statediscrimination}

\subsection{Composite null hypothesis}

One interpretation of a (normalized) box is that the states $\rho_1,\ldots,\rho_m$ form a composite null hypothesis which is to be tested again the simple alternative hypothesis $\sigma$. In this hypothesis testing problem one considers a two-outcome POVM $(\Pi,I-\Pi)$, or \emph{test}, and the decision is based on the measurement result, rejecting the null hypothesis if the second outcome is observed. Such a test is uniquely specified by an operator $\Pi$ such that $0\le T\le I$ and every such operator gives rise to a valid POVM.

A type I error occurs when the null hypothesis is falsely rejected. For every member in the family $\rho_1,\ldots,\rho_m$ we define a probability of type I error,
\begin{equation}
\alpha_i(\Pi)=\Tr\rho_i(I-\Pi)=1-\Tr\rho_i\Pi,
\end{equation}
and the maximum
\begin{equation}
\alpha(\Pi)=\max_i\alpha_i(\Pi)
\end{equation}
is the \emph{significance level} of the test.

In contrast, a type II error means that the correct state was $\sigma$ but the null hypothesis does not get rejected. The probability of a type II error is
\begin{equation}
\beta(\Pi)=\Tr\sigma \Pi.
\end{equation}
In general it is not possible to have a low probability for both types of errors but there is a trade-off between the two quantities. The possible values are exactly characterized by the preordered semiring $(\boxes,\preorderle)$ as follows.
\begin{proposition}\label{prop:compositeerror}
Let $(\rho_1,\ldots,\rho_m,\sigma)$ be a normalized box and $\alpha,\beta\in[0,1]$. The following are equivalent:
\begin{enumerate}
\item there exists a test $\Pi$ with $\alpha(\Pi)\le\alpha$ and $\beta(\Pi)\le\beta$
\item $(\rho_1,\ldots,\rho_m,\sigma)\preorderge((1-\alpha),\ldots,(1-\alpha),\beta)$
\end{enumerate}
\end{proposition}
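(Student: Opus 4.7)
The plan is to identify the CP trace-nonincreasing maps into the one-dimensional target $\boundeds(\complexes)\cong\complexes$ with two-outcome POVMs. Every CP linear map $T:\boundeds(\mathcal{H})\to\complexes$ is a positive linear functional (positivity into a commutative algebra coincides with complete positivity), and hence has the form $T(X)=\Tr(X\Pi)$ for a unique $\Pi\ge 0$. The trace-nonincreasing condition $\Tr T(X)\le\Tr X$ for all $X\ge 0$ becomes $\Tr(X(I-\Pi))\ge 0$ for all $X\ge 0$, which is equivalent to $\Pi\le I$. Thus the maps $T$ eligible to witness the inequality $(\rho_1,\ldots,\rho_m,\sigma)\preorderge((1-\alpha),\ldots,(1-\alpha),\beta)$ in Definition~\ref{def:boxpreorder} correspond bijectively to tests $\Pi$ via $T(X)=\Tr(X\Pi)$.

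For the implication (i)$\Rightarrow$(ii), given a test $\Pi$ with $\alpha(\Pi)\le\alpha$ and $\beta(\Pi)\le\beta$, I set $T(X)=\Tr(X\Pi)$, which is CP and trace-nonincreasing by the correspondence above. Then
\begin{equation}
T(\rho_i)=\Tr(\rho_i\Pi)=1-\alpha_i(\Pi)\ge 1-\alpha(\Pi)\ge 1-\alpha
\end{equation}
for each $i$, and $T(\sigma)=\Tr(\sigma\Pi)=\beta(\Pi)\le\beta$, which is precisely the data required by Definition~\ref{def:boxpreorder}.

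For the reverse implication (ii)$\Rightarrow$(i), I extract from the given CP trace-nonincreasing $T$ the operator $\Pi$ with $0\le\Pi\le I$ satisfying $T(X)=\Tr(X\Pi)$; this is the test. Using $\Tr\rho_i=1$, the inequalities $T(\rho_i)\ge 1-\alpha$ rewrite as $\alpha_i(\Pi)=1-\Tr(\rho_i\Pi)\le\alpha$ for every $i$, whence $\alpha(\Pi)=\max_i\alpha_i(\Pi)\le\alpha$; likewise $T(\sigma)\le\beta$ is just $\beta(\Pi)\le\beta$. There is no genuine obstacle in this proof; the only content beyond the definitions is the identification of CP trace-nonincreasing maps to $\complexes$ with POVM effects, and the normalization of the states is used exactly once, to turn the scalar outputs $T(\rho_i)$ and $T(\sigma)$ into success and type~II error probabilities.
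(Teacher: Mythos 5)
Your proof is correct and follows essentially the same route as the paper: the forward direction uses $T(X)=\Tr(X\Pi)$ exactly as in the paper, and your identification of CP trace-nonincreasing maps into $\complexes$ with effects $0\le\Pi\le I$ is just the paper's choice $\Pi=T^*(1)$ phrased via the Riesz representation of the functional. No gaps.
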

\begin{proof}
Let $(\rho_1,\ldots,\rho_m,\sigma)$ be a normalized box on $\mathcal{H}$ and suppose that a test exists with the properties above. Consider the map $T:\boundeds(\mathcal{H})\to\boundeds(\complexes)$ given by $T(X)=\Tr(X\Pi)$. $T$  is completely positive because $\Pi\ge 0$ and trace nonincreasing because $\Pi\le I$. We apply $T$ to the box:
\begin{equation}
\begin{aligned}
T(\rho_i) & = \Tr(\rho_i\Pi)=1-\alpha_i(\Pi)\ge 1-\alpha(\Pi)\ge 1-\alpha  \\
T(\sigma) & = \Tr(\sigma\Pi)=\beta(\Pi)\le\beta,
\end{aligned}
\end{equation}
therefore $(\rho_1,\ldots,\rho_m,\sigma)\preorderge((1-\alpha),\ldots,(1-\alpha),\beta)$.

Conversely, suppose that $(\rho_1,\ldots,\rho_m,\sigma)\preorderge((1-\alpha),\ldots,(1-\alpha),\beta)$. This means that there exists a completely positive trace nonincreasing map $T:\boundeds(\mathcal{H})\to\boundeds(\complexes)$ such that $T(\rho_i)\ge 1-\alpha$ and $T(\sigma)\le\beta$. Pick such a map and let $\Pi=T^*(1)$ (where $1=\id_{\complexes}$ is the identity map of $\complexes$). Then $0\le\Pi\le I$ and
\begin{equation}
\begin{aligned}
\alpha_i(\Pi) & = \Tr\rho_i(I-\Pi)=1-\Tr\rho_i T^*(1)=1-T(\rho_i)\le \alpha  \\
\beta(\Pi) & = \Tr\sigma\Pi=\Tr\sigma T^*(1)=T(\sigma)\le\beta,
\end{aligned}
\end{equation}
therefore also $\alpha(\Pi)\le\alpha$.
\end{proof}

Suppose that we have access to $n$ copies of such identically prepared boxes. The resource object describing this situation is the power $(\rho_1,\ldots,\rho_m,\sigma)^n=(\rho_1^{\otimes n},\ldots,\rho_m^{\otimes n},\sigma^{\otimes n})$. If we are allowed to perform a joint measurement then we expect to be able to achieve lower probabilities of both types of errors than with a single copy. In particular, an extension of the quantum Stein lemma says that when $n\to\infty$ and the probability of the type I error is required to go to $0$, it is possible to achieve an exponential decay of the type II error, where the exponent is given by the minimum of the relative entropies $\relativeentropy{\rho_i}{\sigma}$ \cite{bjelakovic2005quantum}.

The asymptotic preorder $\asymptoticge$ is able to capture the exponential decay of the type II error and the exponential convergence of the type I error to \emph{one}, called the strong converse regime. More precisely, we have the following characterization.
\begin{proposition}\label{prop:compositeasymptotic}
The following are equivalent
\begin{enumerate}
\item there is a sequence of tests $\Pi_n$ on $\mathcal{H}^{\otimes n}$ for which the type I error is less than $1-2^{-Rn+o(n)}$ and at the same time the type II error decreases as fast as $2^{-rn}$
\item $(\rho_1,\ldots,\rho_m,\sigma)\asymptoticge(2^{-R},\ldots,2^{-R},2^{-r})$.
\end{enumerate}
\end{proposition}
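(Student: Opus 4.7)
My plan is to reduce the asymptotic statement to the single-shot characterization given by Proposition~\ref{prop:compositeerror}, applied at each $n$ to the normalized box $(\rho_1^{\otimes n},\ldots,\rho_m^{\otimes n},\sigma^{\otimes n})$, and then to absorb the sublinear slack in the type~I error exponent into the sublinear factor $u^{k_n}$ allowed in the definition of $\asymptoticge$. The key observation I will use throughout is that $u^{k_n}=(2^{k_n},\ldots,2^{k_n},1)$ on $\complexes$ scales only the $\rho$-components, and that $(2^{-R},\ldots,2^{-R},2^{-r})^n=(2^{-Rn},\ldots,2^{-Rn},2^{-rn})$, so multiplication by $u^{k_n}$ shifts the $\rho$-side exponent additively by $k_n$ and leaves the $\sigma$-side alone.

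For $(\text{ii})\Rightarrow(\text{i})$, I will unpack the definition of $\asymptoticge$ to obtain a sublinear sequence $(k_n)\subseteq\naturals$ with
\[u^{k_n}(\rho_1,\ldots,\rho_m,\sigma)^n\preorderge(2^{-R},\ldots,2^{-R},2^{-r})^n.\]
Any completely positive trace-nonincreasing map $T$ certifying this relation also certifies $(\rho_1^{\otimes n},\ldots,\rho_m^{\otimes n},\sigma^{\otimes n})\preorderge(2^{-Rn-k_n},\ldots,2^{-Rn-k_n},2^{-rn})$, since $T(2^{k_n}\rho_i^{\otimes n})\ge 2^{-Rn}$ is just $T(\rho_i^{\otimes n})\ge 2^{-Rn-k_n}$. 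Proposition~\ref{prop:compositeerror} applied to this normalized box then produces a test $\Pi_n$ with $\alpha(\Pi_n)\le 1-2^{-Rn-k_n}=1-2^{-Rn+o(n)}$ and $\beta(\Pi_n)\le 2^{-rn}$.

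For the converse, I will write $1-\alpha(\Pi_n)=2^{-Rn+s_n}$ with $s_n=o(n)$ and invoke Proposition~\ref{prop:compositeerror} to conclude $(\rho_1^{\otimes n},\ldots,\rho_m^{\otimes n},\sigma^{\otimes n})\preorderge(2^{-Rn+s_n},\ldots,2^{-Rn+s_n},2^{-rn})$. I then choose any sublinear $(k_n)\subseteq\naturals$ with $k_n\ge -s_n$, for instance $k_n=\lceil\max\{0,-s_n\}\rceil$. A trivial one-dimensional identity-channel argument shows $(2^{-Rn+s_n+k_n},\ldots,2^{-Rn+s_n+k_n},2^{-rn})\preorderge(2^{-Rn},\ldots,2^{-Rn},2^{-rn})$ whenever $s_n+k_n\ge 0$, and multiplying the preceding submajorization by $u^{k_n}$ and chaining with this inequality yields $u^{k_n}(\rho_1,\ldots,\rho_m,\sigma)^n\preorderge(2^{-R},\ldots,2^{-R},2^{-r})^n$, which is exactly $(\rho_1,\ldots,\rho_m,\sigma)\asymptoticge(2^{-R},\ldots,2^{-R},2^{-r})$.

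The substance of the argument is bookkeeping and I do not anticipate a serious obstacle; the one point requiring mild attention is that the sublinear correction $s_n$ in the type~I error exponent can have either sign, while the multiplier $u^{k_n}$ in the definition of $\asymptoticge$ appears only on one side with a nonnegative integer exponent, and the choice $k_n=\lceil\max\{0,-s_n\}\rceil$ reconciles the two.
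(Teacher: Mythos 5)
Your proof is correct and follows the paper's route: reduce to Proposition~\ref{prop:compositeerror} applied at each $n$, and absorb the sublinear slack into $u^{k_n}$, using that multiplying by $u^{k_n}$ simply rescales the $\rho$-side exponents (which the paper expresses by appealing to the invertibility of $u$). One small slip in the converse direction is defining $s_n$ by $1-\alpha(\Pi_n)=2^{-Rn+s_n}$ and asserting $s_n\in o(n)$, which need not hold if the test does much better than required; but the hypothesis already furnishes a sublinear $s_n$ with $1-\alpha(\Pi_n)\ge 2^{-Rn+s_n}$, and since your subsequent steps only use this inequality (together with $2^{-Rn+s_n}\le 1$, arranged by replacing $s_n$ with $\min\{s_n,Rn\}$ if necessary), the argument is unaffected.
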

\begin{proof}
Since $u$ is invertible, the condition appearing in the definition of the asymptotic preorder may be written as
\begin{equation}
\begin{split}
(\rho_1^{\otimes n},\ldots,\rho_m^{\otimes n},\sigma^{\otimes n})
 & \preorderge u^{-k_n}(2^{-R},\ldots,2^{-R},2^{-r})^n  \\
 & = (2^{-Rn-k_n},\ldots,2^{-Rn-k_n},2^{-rn})
\end{split}
\end{equation}
where $k_n/n\to 0$. According to Proposition~\ref{prop:compositeerror}, this is equivalent to the existence of a sequence of tests $\Pi_n$ such that $\alpha(\Pi_n)\le 1-2^{-Rn-k_n}$ and $\beta(\Pi_n)\le 2^{-rn}$ with $k_n\in o(n)$ as claimed.
\end{proof}
To achieve asymptotically the smallest type II error probability for a given exponent $r$, we need to find the smallest $R$ satisfying the equivalent conditions. Denoting this value by $R^*(r)$, Proposition~\ref{prop:compositeasymptotic} and Corollary~\ref{cor:asymptotic} implies
\begin{equation}
R^*(r)=\max_i\sup_{\alpha>1}\frac{\alpha-1}{\alpha}\left[r-\sandwicheddivergence{\alpha}{\rho_i}{\sigma}\right].
\end{equation}
In particular, the exponent is given by the minimum of the pairwise exponents \cite{mosonyi2015quantum} similarly as in the extended Stein lemma.

\subsection{Multiple hypotheses}

In a multiple state discrimination problem one performs a measurement with multiple outcomes, one corresponding to each of the possible states. Mathematically, such a measurement is described by a POVM $(\Pi_1,\ldots,\Pi_m,I-(\Pi_1+\cdots+\Pi_m))$ on a set of size $m+1$. Upon observing the outcome $i$ ($m+1$), the experimenter concludes that the unknown state was $\rho_i$ ($\sigma$). In such a setting one can define $(m+1)m$ different error probabilities depending on which state is incorrectly identified as which other state. Alternatively, one may form $m+1$ probabilities of successful detections (these are of course functionally related to the error probabilities). In our framework it is possible to control $2m$ of these probabilities as follows.
\begin{proposition}\label{prop:multipleerror}
Let $(\rho_1,\ldots,\rho_m,\sigma)$ be a normalized box and $a_1,\ldots,a_m,b_1,\ldots,b_m\in[0,1]$. Let $\ket{1},\ldots,\ket{m}$ be an orthonormal basis in $\complexes^m$ and consider the operator
\begin{equation}
b=\sum_{i=1}^m b_i\ketbra{i}{i}.
\end{equation}
Then the following are equivalent:
\begin{enumerate}
\item there exists a POVM $(\Pi_1,\ldots,\Pi_m,I-(\Pi_1+\cdots+\Pi_m))$ with
\begin{align}
\Tr\rho_i\Pi_i\ge a_i
\intertext{and}
\Tr\sigma\Pi_i\le b_i
\end{align}
for every $1\le i\le m$.
\item $(\rho_1,\ldots,\rho_m,\sigma)\preorderge(a_1\ketbra{1}{1},\ldots,a_m\ketbra{m}{m},b)$.
\end{enumerate}
\end{proposition}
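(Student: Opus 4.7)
The plan is to mirror the proof of Proposition~\ref{prop:compositeerror}, with the correspondence between POVMs and CPTN maps given by duality: $\Pi_i \leftrightarrow T^*(\ketbra{i}{i})$, generalizing the binary correspondence $\Pi\leftrightarrow T^*(1)$. The key observation is that the target of the preorder is supported on operators diagonal in the basis $\ket{1},\ldots,\ket{m}$, so the semidefinite inequalities reduce to scalar inequalities on the $(i,i)$ diagonal entries of $T(\rho_i)$ and $T(\sigma)$.

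For the forward implication, given a POVM $(\Pi_1,\ldots,\Pi_m,I-\sum_j\Pi_j)$ with the stated error bounds, I would define $T:\boundeds(\mathcal{H})\to\boundeds(\complexes^m)$ by
\begin{equation}
T(X) = \sum_{j=1}^m \Tr(X\Pi_j)\ketbra{j}{j}.
\end{equation}
This is completely positive as a sum of positive functionals composed with rank-one embeddings, and trace-nonincreasing because $\sum_j\Pi_j\le I$. Since $T(\rho_i)$ is diagonal with nonnegative entries and its $i$-th diagonal entry equals $\Tr(\rho_i\Pi_i)\ge a_i$, one has $T(\rho_i)\ge a_i\ketbra{i}{i}$; similarly $T(\sigma)$ is diagonal with $j$-th entry $\Tr(\sigma\Pi_j)\le b_j$, so $T(\sigma)\le b$, establishing the box inequality.

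For the reverse implication, given a CPTN map $T$ witnessing $(\rho_1,\ldots,\rho_m,\sigma)\preorderge(a_1\ketbra{1}{1},\ldots,a_m\ketbra{m}{m},b)$, I would set $\Pi_i = T^*(\ketbra{i}{i})$. Then $\Pi_i\ge 0$ because $T^*$ is completely positive, and $\sum_i\Pi_i = T^*(I)\le I$ because $T$ being trace-nonincreasing is equivalent to $T^*$ being subunital. The error bounds then follow by reading off the $(i,i)$ diagonal entries of the semidefinite inequalities via the adjoint identity $\Tr(X T^*(Y)) = \Tr(T(X)Y)$:
\begin{equation}
\Tr(\rho_i\Pi_i) = \bra{i}T(\rho_i)\ket{i}\ge a_i, \qquad \Tr(\sigma\Pi_i) = \bra{i}T(\sigma)\ket{i}\le b_i.
\end{equation}

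I do not anticipate any genuine obstacle; the argument is an adjoint duality of the same flavor as in the binary case, the only new ingredient being the fact that the partition-of-unity constraint $\sum_i\Pi_i\le I$ on the POVM exactly matches the trace-nonincreasing constraint on $T$ when the codomain is the diagonal subalgebra of $\boundeds(\complexes^m)$.
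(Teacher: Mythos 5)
Your proof is correct and follows essentially the same route as the paper: the forward direction uses the same map $T(X)=\sum_j\Tr(X\Pi_j)\ketbra{j}{j}$, and the converse uses $\Pi_i=T^*(\ketbra{i}{i})$ with subunitality of $T^*$ and the adjoint identity to read off the diagonal entries. No gaps.
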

\begin{proof}
Suppose that $(\Pi_1,\ldots,\Pi_m,I-(\Pi_1+\cdots+\Pi_m))$ is a POVM satisfying the conditions. Consider the channel $T:\boundeds(\mathcal{H})\to\boundeds(\complexes^m)$
\begin{equation}
T(X)=\sum_{j=1}^m(\Tr X\Pi_j)\ketbra{j}{j}.
\end{equation}
It satisfies
\begin{align*}
T(\rho_i) & = \sum_{j=1}^m(\Tr \rho_i\Pi_j)\ketbra{j}{j} \ge (\Tr \rho_i\Pi_i)\ketbra{i}{i} \ge a_i\ketbra{i}{i}  \\
T(\sigma) & = \sum_{j=1}^m(\Tr \sigma\Pi_j)\ketbra{j}{j} \le \sum_{j=1}^m b_j\ketbra{j}{j}=b,
\end{align*}
therefore $(\rho_1,\ldots,\rho_m,\sigma)\preorderge(a_1\ketbra{1}{1},\ldots,a_m\ketbra{m}{m},b)$.

Conversely, suppose that $(\rho_1,\ldots,\rho_m,\sigma)\preorderge(a_1\ketbra{1}{1},\ldots,a_m\ketbra{m}{m},b)$. This means that there is a channel $T:\boundeds(\mathcal{H})\to\boundeds(\complexes^m)$ satisfying
\begin{align*}
T(\rho_i) & \ge a_i\ketbra{i}{i}  \\
T(\sigma) & \le b.
\end{align*}
Let $T$ be such a channel and consider the operators $\Pi_i=T^*(\ketbra{i}{i})$. These are positive and satisfy
\begin{equation}
\sum_i\Pi_i=T^*(\sum_i\ketbra{i}{i})=T^*(I)\le I,
\end{equation}
therefore $(\Pi_1,\ldots,\Pi_m,I-(\Pi_1+\cdots+\Pi_m))$ is a POVM. We estimate the probabilities as
\begin{equation}
\begin{aligned}
\Tr\rho_i\Pi_i & = \Tr\rho_i T^*(\ketbra{i}{i}) = \bra{i}T(\rho_i)\ket{i} \ge\bra{i}a_i\ket{i}\braket{i}{i}=a_i  \\
\Tr\sigma\Pi_i & = \Tr\sigma T^*(\ketbra{i}{i}) = \bra{i}T(\sigma)\ket{i} \le \bra{i}b\ket{i} = b_i.
\end{aligned}
\end{equation}
\end{proof}

Next we relate the asymptotic preorder to measurements on multiple copies of the same unknown state. This is not as straightforward as the extension in the case of a composite hypothesis, because the powers of a box of the form $(a_1\ketbra{1}{1},\ldots,a_m\ketbra{m}{m},B)$ are not of the same form. The 
\begin{lemma}\label{lem:stdboxpowers}
Let $a_1,\ldots,a_m,b_1,\ldots,b_m\in[0,1]$ and
\begin{equation}
B=\sum_{i=1}^m b_i\ketbra{i}{i}.
\end{equation}
as in Proposition~\ref{prop:multipleerror}.
Then for every $n\in\naturals$, $n\ge 1$ the inequalities
\begin{align}
(a_1\ketbra{1}{1},\ldots,a_m\ketbra{m}{m},B)^{\otimes n} & \preorderge(a_1^n\ketbra{1}{1},\ldots,a_m^n\ketbra{m}{m},B^n)
\intertext{and}
(a_1\ketbra{1}{1},\ldots,a_m\ketbra{m}{m},B)^{\otimes n} & \preorderle(a_1^n\ketbra{1}{1},\ldots,a_m^n\ketbra{m}{m},B^n)
\end{align}
hold.
\end{lemma}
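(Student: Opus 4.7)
The plan is to construct explicit completely positive trace non-increasing maps witnessing each inequality; no delicate estimates are needed, because the natural choices produce equalities or manifest dominations in every semidefinite constraint. Throughout, write $\ket{\vec i}:=\ket{i}^{\otimes n}\in(\complexes^m)^{\otimes n}$ for the ``constant-string'' basis vectors, so that the $i$-th state of $(a_1\ketbra{1}{1},\ldots,a_m\ketbra{m}{m},B)^{\otimes n}$ is $a_i^n\ketbra{\vec i}{\vec i}$ while the last state $B^{\otimes n}$ is diagonal in the full product basis with eigenvalues $b_{j_1}\cdots b_{j_n}$.

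For the direction $\preorderge$, I will take $T\colon\boundeds((\complexes^m)^{\otimes n})\to\boundeds(\complexes^m)$ defined by
\[
T(X)=\sum_{i=1}^m\bra{\vec i}X\ket{\vec i}\,\ketbra{i}{i}.
\]
This is a sum of measure-and-prepare maps, hence completely positive; its adjoint satisfies $T^*(I)=\sum_{i=1}^m\ketbra{\vec i}{\vec i}\le I$, so $T$ is trace non-increasing. Direct evaluation gives $T(a_i^n\ketbra{\vec i}{\vec i})=a_i^n\ketbra{i}{i}$ and $T(B^{\otimes n})=\sum_i b_i^n\ketbra{i}{i}=B^n$, so both inequalities required by Definition~\ref{def:boxpreorder} hold with equality.

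For the direction $\preorderle$, I will use the ``reverse'' map $S\colon\boundeds(\complexes^m)\to\boundeds((\complexes^m)^{\otimes n})$ given by
\[
S(X)=\sum_{i=1}^m\bra{i}X\ket{i}\,\ketbra{\vec i}{\vec i},
\]
which is manifestly completely positive and in fact trace-preserving, since $\Tr S(X)=\sum_i\bra{i}X\ket{i}=\Tr X$. One finds $S(a_i^n\ketbra{i}{i})=a_i^n\ketbra{\vec i}{\vec i}$ with equality, and $S(B^n)=\sum_i b_i^n\ketbra{\vec i}{\vec i}$ is dominated by $B^{\otimes n}=\sum_{\vec j}b_{j_1}\cdots b_{j_n}\ketbra{\vec j}{\vec j}$ because $S(B^n)$ contributes only to the diagonal terms $\vec j=\vec i$, whose coefficients $b_i^n=b_i\cdots b_i$ agree with the corresponding eigenvalues of $B^{\otimes n}$ exactly.

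There is no substantive obstacle: both maps exploit the observation that the ``relevant information'' of the tensor-power box sits on the $m$-element subset $\{\vec 1,\ldots,\vec m\}$ of product-basis vectors, which can be cleanly identified with the standard basis of $\complexes^m$. The only care lies in verifying the trace condition for $T$, where one uses that $\sum_i\ketbra{\vec i}{\vec i}$ is a rank-$m$ projection on the $m^n$-dimensional space.
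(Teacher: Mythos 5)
Your proof is correct and uses essentially the same construction as the paper: the same measure-and-prepare maps $X\mapsto\sum_i\bra{\vec i}X\ket{\vec i}\ketbra{i}{i}$ and $X\mapsto\sum_i\bra{i}X\ket{i}\ketbra{\vec i}{\vec i}$ (written in the paper in Kraus form with operators $\ketbra{j}{jj\ldots j}$ and their adjoints), with the same verification that the required semidefinite dominations hold, with equality except for $S(B^n)\le B^{\otimes n}$.
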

\begin{proof}
For the first inequality, consider the map $T:\boundeds({\complexes^m}^{\otimes n})\to\boundeds(\complexes^m)$ given as
\begin{equation}
T(X)=\sum_{j=1}^m\ketbra{j}{jj\ldots j}X\ketbra{jj\ldots j}{j}.
\end{equation}
Then $T$ is completely positive and trace nonincreasing, $T((a_i\ketbra{i}{i})^{\otimes n})=a_i^n\ketbra{i}{i}$ and $T(B^{\otimes n})=B^n$.

For the second inequality, consider the map $T:\boundeds(\complexes^m)\to\boundeds((\complexes^m)^{\otimes n})$ defined as
\begin{equation}
T(X)=\sum_{j=1}^m\ketbra{jj\ldots j}{j}X{j}\ketbra{jj\ldots j}.
\end{equation}
Then $T$ is completely positive and trace nonincreasing, $T(a_i^n\ketbra{i}{i})=a_i^n\ketbra{ii\ldots i}{ii\ldots i}=(a_i\ketbra{i}{i})^{\otimes n}$ and
\begin{equation}
\begin{split}
T(B^n)
 & = \sum_{j=1}^m\ketbra{jj\ldots j}{j}\left(\sum_{i=1}^m b_i^n\ketbra{i}{i}\right)\ketbra{j}{jj\ldots j}  \\
 & = \sum_{i=1}^m b_i^n\ketbra{ii\ldots i}{ii\ldots i}  \\
 & \le B^{\otimes n}.
\end{split}
\end{equation}
\end{proof}

With the help of Lemma~\ref{lem:stdboxpowers} we can prove the analogue of Proposition~\ref{prop:compositeasymptotic}.
\begin{proposition}\label{prop:multipleasymptotic}
Let $(\rho_1,\ldots,\rho_m,\sigma)$ be a box and $r_1,\ldots,r_m,R_1,\ldots,R_m\ge 0$, and consider the operator $r=\sum_{i=1}^m r_i\ketbra{i}{i}$. The following are equivalent
\begin{enumerate}
\item there exists a sequence of POVMs $(\Pi_{n,1},\ldots,\Pi_{n,m},I-(\Pi_{n,1}+\cdots+\Pi_{n,m}))$ such that for all $i\in\{1,\ldots,m\}$ the observed probabilities behave as
\begin{equation}
\begin{aligned}
\Tr\rho_i^{\otimes n}\Pi_{n,i} & \ge 2^{-R_in+o(n)}  \\
\Tr\sigma^{\otimes n}\Pi_{n,i} & \le 2^{-r_in}.
\end{aligned}
\end{equation}
\item $(\rho_1,\ldots,\rho_m,\sigma)\asymptoticge(2^{-R_1}\ketbra{1}{1},\ldots,2^{-R_m}\ketbra{m}{m},2^{-r})$.
\end{enumerate}
\end{proposition}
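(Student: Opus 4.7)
The plan is to mirror the proof of Proposition~\ref{prop:compositeasymptotic}: use the invertibility of the power universal element $u=(2,\ldots,2,1)$ to rewrite the defining condition of $\asymptoticge$, translate between the box preorder and POVM existence via Proposition~\ref{prop:multipleerror}, and use Lemma~\ref{lem:stdboxpowers} to handle the fact that powers of the standard box $(a_1\ketbra{1}{1},\ldots,a_m\ketbra{m}{m},B)$ are not themselves of the same form. Since $u$ is invertible, the condition $u^{k_n}(\rho_1,\ldots,\sigma)^n\preorderge(2^{-R_1}\ketbra{1}{1},\ldots,2^{-r})^n$ with $k_n\in o(n)$ is equivalent to
\begin{equation}
(\rho_1^{\otimes n},\ldots,\sigma^{\otimes n})\preorderge u^{-k_n}(2^{-R_1}\ketbra{1}{1},\ldots,2^{-R_m}\ketbra{m}{m},2^{-r})^{\otimes n},
\end{equation}
where the prefactor $u^{-k_n}$ multiplies the first $m$ entries by $2^{-k_n}$ and leaves the $\sigma$-entry unchanged.

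For (ii)$\Rightarrow$(i), I would apply the first inequality of Lemma~\ref{lem:stdboxpowers} with $a_i=2^{-R_i}$ and $B=2^{-r}=\sum_i 2^{-r_i}\ketbra{i}{i}$, noting that $B^n=\sum_i 2^{-r_in}\ketbra{i}{i}$, to dominate the tensor power on the right by the $\complexes^m$-box $(2^{-R_1n}\ketbra{1}{1},\ldots,2^{-R_mn}\ketbra{m}{m},\sum_i 2^{-r_in}\ketbra{i}{i})$. Combining with $u^{-k_n}$ and transitivity yields
\begin{equation}
(\rho_1^{\otimes n},\ldots,\sigma^{\otimes n})\preorderge\left(2^{-R_1n-k_n}\ketbra{1}{1},\ldots,2^{-R_mn-k_n}\ketbra{m}{m},\textstyle\sum_i 2^{-r_in}\ketbra{i}{i}\right),
\end{equation}
whose right-hand side has the standard form of Proposition~\ref{prop:multipleerror} (with $a_i,b_i\in[0,1]$ since $R_i,r_i,k_n\ge 0$). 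That proposition then produces POVMs with $\Tr\rho_i^{\otimes n}\Pi_{n,i}\ge 2^{-R_in-k_n}=2^{-R_in+o(n)}$ and $\Tr\sigma^{\otimes n}\Pi_{n,i}\le 2^{-r_in}$, as required by (i).

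For the converse (i)$\Rightarrow$(ii), the $o(n)$ tolerance is index-dependent, $\Tr\rho_i^{\otimes n}\Pi_{n,i}\ge 2^{-R_in-l_{n,i}}$ with $l_{n,i}/n\to 0$, so the first step is to unify it by setting $l_n=\max_i l_{n,i}$, which is still sublinear. Proposition~\ref{prop:multipleerror} then gives $(\rho_1^{\otimes n},\ldots,\sigma^{\otimes n})\preorderge(2^{-R_1n-l_n}\ketbra{1}{1},\ldots,2^{-R_mn-l_n}\ketbra{m}{m},\sum_i 2^{-r_in}\ketbra{i}{i})$, and the second inequality of Lemma~\ref{lem:stdboxpowers} lets this right-hand side dominate $u^{-l_n}(2^{-R_1}\ketbra{1}{1},\ldots,2^{-r})^{\otimes n}$. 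Multiplying through by $u^{l_n}$ recovers the definition of $\asymptoticge$. The only anticipated obstacle is pure bookkeeping: combining the index-dependent $o(n)$ terms into a single sublinear $l_n$, and verifying that the scalars $2^{-R_in-l_n},2^{-r_in}$ remain in $[0,1]$ so that Proposition~\ref{prop:multipleerror} is applicable; no new analytic input beyond Lemma~\ref{lem:stdboxpowers} and Proposition~\ref{prop:multipleerror} is needed.
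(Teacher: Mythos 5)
Your proposal is correct and follows essentially the same route as the paper's proof: rewrite the asymptotic relation using the invertibility of $u$, pass between the tensor powers of the standard box and its diagonal form via the two inequalities of Lemma~\ref{lem:stdboxpowers}, and translate to and from POVMs with Proposition~\ref{prop:multipleerror}. The only difference is that you spell out the two implications (and the sublinear bookkeeping with $l_n=\max_i l_{n,i}$) separately, while the paper compresses both directions into a single chain of equivalences.
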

\begin{proof}
The asymptotic relation is equivalent to
\begin{equation}
\begin{split}
(\rho_1^{\otimes n},\ldots,\rho_m^{\otimes n},\sigma^{\otimes n})
 & \preorderge u^{-k_n}(2^{-R_1}\ketbra{1}{1},\ldots,2^{-R_m}\ketbra{m}{m},2^{-r})^n  \\
 & \preorderge (2^{-R_1n-k_n}\ketbra{1}{1},\ldots,2^{-R_mn-k_n}\ketbra{m}{m},2^{-rn})
\end{split}
\end{equation}
by Lemma~\ref{lem:stdboxpowers}, where $k_n/n\to 0$. Proposition~\ref{prop:multipleerror} in turn says that this is equivalent to the existence of a sequence of POVMs $(\Pi_{n,1},\ldots,\Pi_{n,m},I-(\Pi_{n,1}+\cdots+\Pi_{n,m}))$ such that $\Tr\rho_i^{\otimes n}\Pi_{n,i}\ge 2^{-R_in-k_n}$ and $\Tr\sigma^{\otimes n}\Pi_{n,i}\le 2^{-r_in}$ for all $i\in\{1,\ldots,m\}$ with $k_n\in o(n)$.
\end{proof}

From Corollary~\ref{cor:asymptotic} we get that the exponents $R_1,\ldots,R_m,r_1,\ldots,r_m$ are achievable iff for all $i\in\{1,\ldots,m\}$ the inequality
\begin{equation}
R_i\ge\sup_{\alpha>1}\frac{\alpha-1}{\alpha}\left[r_i-\sandwicheddivergence{\alpha}{\rho_i}{\sigma}\right]
\end{equation}
holds.

\section*{Acknowledgement}

This work was supported by the \'UNKP-19-4 New National Excellence Program of the Ministry for Innovation and Technology and the J\'anos Bolyai Research Scholarship of the Hungarian Academy of Sciences. We acknowledge support from the Hungarian National Research, Development and Innovation Office (NKFIH) within the Quantum Technology National Excellence Program (Project Nr.~2017-1.2.1-NKP-2017-00001) and via the research grants K124152, KH129601.

\bibliography{refs}{}

\end{document}